\theoremstyle{plain}
\newtheorem{mythe}{Theorem}
\theoremstyle{remark}
\theoremstyle{plain}
\theoremstyle{remark}
\theoremstyle{plain}
\theoremstyle{remark}
\theoremstyle{remark}
\theoremstyle{remark}
\theoremstyle{remark}
\theoremstyle{remark}
\theoremstyle{remark}
\def\UrlSpecials{\do\~{\kern -.15em\lower .7ex\hbox{~}\kern .04em}} \catcode`~=13
\newcommand{\calC}{\mathcal{C}}
\newcommand{\calG}{\mathcal{G}}
\newcommand{\calL}{\mathcal{L}}
\newcommand{\calN}{\mathcal{N}}
\newcommand{\calO}{\mathcal{O}}
\newcommand{\calR}{\mathcal{R}}
\newcommand{\bg}{\mathbf{g}}
\newcommand{\bG}{\mathbf{G}}
\newcommand{\bp}{\mathbf{p}}
\newcommand{\bP}{\mathbf{P}}
\newcommand{\bbE}{\mathbb{E}}
\DeclareMathAlphabet{\mathbsf}{OT1}{cmss}{bx}{n}
\DeclareMathAlphabet{\mathssf}{OT1}{cmss}{m}{sl}
\DeclareSymbolFont{bsfletters}{OT1}{cmss}{bx}{n}
\DeclareSymbolFont{ssfletters}{OT1}{cmss}{m}{n}
\DeclareMathSymbol{\bsfGamma}{0}{bsfletters}{'000}
\DeclareMathSymbol{\ssfGamma}{0}{ssfletters}{'000}
\DeclareMathSymbol{\bsfDelta}{0}{bsfletters}{'001}
\DeclareMathSymbol{\ssfDelta}{0}{ssfletters}{'001}
\DeclareMathSymbol{\bsfTheta}{0}{bsfletters}{'002}
\DeclareMathSymbol{\ssfTheta}{0}{ssfletters}{'002}
\DeclareMathSymbol{\bsfLambda}{0}{bsfletters}{'003}
\DeclareMathSymbol{\ssfLambda}{0}{ssfletters}{'003}
\DeclareMathSymbol{\bsfXi}{0}{bsfletters}{'004}
\DeclareMathSymbol{\ssfXi}{0}{ssfletters}{'004}
\DeclareMathSymbol{\bsfPi}{0}{bsfletters}{'005}
\DeclareMathSymbol{\ssfPi}{0}{ssfletters}{'005}
\DeclareMathSymbol{\bsfSigma}{0}{bsfletters}{'006}
\DeclareMathSymbol{\ssfSigma}{0}{ssfletters}{'006}
\DeclareMathSymbol{\bsfUpsilon}{0}{bsfletters}{'007}
\DeclareMathSymbol{\ssfUpsilon}{0}{ssfletters}{'007}
\DeclareMathSymbol{\bsfPhi}{0}{bsfletters}{'010}
\DeclareMathSymbol{\ssfPhi}{0}{ssfletters}{'010}
\DeclareMathSymbol{\bsfPsi}{0}{bsfletters}{'011}
\DeclareMathSymbol{\ssfPsi}{0}{ssfletters}{'011}
\DeclareMathSymbol{\bsfOmega}{0}{bsfletters}{'012}
\DeclareMathSymbol{\ssfOmega}{0}{ssfletters}{'012}
\newcommand{\tilQ}{\widetilde{Q}}
\newcommand{\tilr}{\widetilde{r}}
\newcommand{\tilR}{\widetilde{R}}
\newcommand{\tilW}{\widetilde{W}}
\newcommand{\hatX}{\widehat{X}}
\newcommand{\barP}{\bar{P}}
\newcommand{\balpha}{\bm{\alpha}}
\newcommand{\btau}{\bm{\tau}}
\newcommand{\bpsi}{\bm{\psi}}
\def\norm#1{\left\| #1 \right\|}
\def\norm2#1{\left\| #1 \right\|_2}
\def\norm22#1{\left\| #1 \right\|_2^2}
\newcommand{\eqa}{\stackrel{(a)}{=}}
\newcommand{\leb}{\stackrel{(b)}{\le}}
\newcommand{\lec}{\stackrel{(c)}{\le}}
\newcommand{\qednew}{\nobreak \ifvmode \relax \else
      \ifdim\lastskip<1.5em \hskip-\lastskip
      \hskip1.5em plus0em minus0.5em \fi \nobreak
      \vrule height0.75em width0.5em depth0.25em\fi}
\title{Optimal Resource Allocation in Full-Duplex Ambient Backscatter Communication Networks for Wireless-Powered IoT}
\author{Gang~Yang, \emph{Member, IEEE}, Dongdong Yuan, Ying-Chang~Liang, \emph{Fellow, IEEE}, \\ Rui Zhang, \emph{Fellow, IEEE}, and Victor C. M. Leung, \emph{Fellow, IEEE}
\thanks{G.~Yang and D. Yuan are with the National Key Laboratory of Science and Technology on Communications, and Center for Intelligent Networking and Communications (CINC), University of Electronic Science and Technology of China (UESTC), Chengdu 611731, China (e-mails: yanggang@uestc.edu.cn, 201721260420@std.uestc.edu.cn).}
\thanks{Y.-C. Liang is with Center for Intelligent Networking and Communications (CINC), University of Electronic Science and Technology of China (UESTC), Chengdu 611731, China (e-mail: liangyc@ieee.org). (\emph{Corresponding author: Y.-C. Liang.})}
\thanks{R. Zhang is with the Department of Electrical and Computer Engineering, National University of Singapore, 117583, Singapore (e-mail: elezhang@nus.edu.sg).}
\thanks{V. C. M. Leung is with the Department of Electrical and Computer Engineering,
the University of British Columbia, Vancouver, BC, V6T 1Z4, Canada (e-mail: vleung@ece.ubc.ca).}}
\begin{document}
 \maketitle

\begin{abstract}
This paper considers an ambient backscatter communication (AmBC) network in which a full-duplex access point (FAP) simultaneously transmits downlink orthogonal frequency division multiplexing (OFDM) signals to its legacy user (LU) and receives uplink signals backscattered from multiple BDs in a time-division-multiple-access manner. To maximize the system throughput and ensure fairness, we aim to maximize the minimum throughput among all BDs by jointly optimizing the backscatter time and reflection coefficients of the BDs, and the FAP's subcarrier power allocation, subject to the LU's throughput constraint, the BDs' harvested-energy constraints, and other practical constraints. For the case with a single BD, we obtain closed-form solutions and propose an efficient algorithm by using the Lagrange duality method. For the general case with multiple BDs, we propose an iterative algorithm by leveraging the block coordinated decent and successive convex optimization techniques. We further show the convergence performances of the proposed algorithms and analyze their complexities. In addition, we study the throughput region which characterizes the Pareto-optimal throughput trade-offs among all BDs. Finally, extensive simulation results show that the proposed joint design achieves significant throughput gain as compared to the benchmark schemes. 
\end{abstract}


\section{Introduction}
Internet of Things (IoT) is a key application paradigm for the forthcoming fifth-generation (5G) and future wireless communication systems. IoT devices in practice have strict limitations on energy, cost, and complexity, thus it is highly desirable to design energy- and spectrum-efficient communication technologies \cite{StankovicJIoT14,YangZHanIoTJ18}. Recently, ambient backscatter communication (AmBC) has emerged as a promising candidate to fulfill such demand. On one hand, AmBC enables wireless-powered backscatter devices (BDs) to modulate their information symbols over ambient radio-frequency (RF) carriers (e.g., WiFi, TV, or cellular signals) without using any costly and power-hungry RF transmitter~\cite{ABCSigcom13}. On the other hand, no dedicated spectrum is needed for AmBC due to the spectrum sharing between the backscatter transmission and the ambient transmission~\cite{KangLiangICC17}.

The existing AmBC systems can be divided into three categories, namely the traditional AmBC (TABC) system with separated backscatter receiver and ambient transmitter (and its legacy\footnote{Hereinafter, the term ``legacy'' refers to any existing wireless communication systems such as WiFi.} receiver) \cite{WiFiBackscatter14,TurbochargingABCSigcom14,QianGaoAmBCTWC16,FSBackscatterSigcomm16, WangSmithFMBackscatter17,YangLiangZhangPeiTCOM17,DarsenaVerdeTCOM17,KangLiangICC17,HoangNiyatoAmBCTCOM17, ShahKaeWonChoiIoTJ18,ShenAthalyeDjuricIoTJ16}, the cooperative AmBC (CABC) system with co-located backscatter receiver and  legacy receiver \cite{YangLiangZhangIoTJ18,DuanRuttikTVT17,LongYangGC17}, and the full-duplex AmBC (FABC) system with co-located backscatter receiver and ambient transmitter \cite{BackFiSigcom15,DarsenaVerdeTCOM17}.

The TABC systems are most studied in the literature \cite{WiFiBackscatter14,TurbochargingABCSigcom14,QianGaoAmBCTWC16,FSBackscatterSigcomm16, WangSmithFMBackscatter17,YangLiangZhangPeiTCOM17,DarsenaVerdeTCOM17,KangLiangICC17,HoangNiyatoAmBCTCOM17, ShahKaeWonChoiIoTJ18,ShenAthalyeDjuricIoTJ16}. One of the key challenges for TABC systems is the strong direct-link interference from the ambient transmitter received at the backscatter receiver. Frequency-shifting method is proposed in \cite{FSBackscatterSigcomm16, WangSmithFMBackscatter17} to avoid the direct-link interference, while in~\cite{YangLiangZhangPeiTCOM17}, the direct-link interference is cancelled out through using the specific feature of the ambient signals.
There are also studies on TABC system performance and resource allocations \cite{DarsenaVerdeTCOM17, KangLiangICC17, HoangNiyatoAmBCTCOM17, ShahKaeWonChoiIoTJ18}. For example, in \cite{KangLiangICC17}, a TABC system is modelled from a spectrum sharing perspective, and the ergodic capacity of the secondary backscatter system is maximized. In~\cite{DarsenaVerdeTCOM17}, the capacity bounds for backscatter communication are derived for a TABC system, under the assumption that the backscatter receiver knows legacy symbols.


In CABC systems, the signals from the ambient transmitter are recovered at the backscatter receiver instead of being treated as interference \cite{YangLiangZhangIoTJ18,DuanRuttikTVT17,LongYangGC17}. In particular, the optimal maximum-likelihood detector, suboptimal linear detectors, and the successive interference-cancellation based detectors are derived in \cite{YangLiangZhangIoTJ18}. In \cite{DuanRuttikTVT17}, the sum rate of the backscatter communication and the legacy communication is analyzed under both perfect and imperfect channel state information for a CABC system with multiple antennas at each node. In \cite{LongYangGC17}, the transmit beamforming is optimized to maximize the sum rate of a CABC system in which the ambient transmitter is equipped with multiple antennas.

In FABC systems, the backscatter receiver and ambient transmitter are collocated, thus the signals from the ambient transmitter can be cancelled out~ \cite{BackFiSigcom15,DarsenaVerdeTCOM17}. The authors in~\cite{DarsenaVerdeTCOM17} analyze the capacity performances of both the backscatter communication and the legacy communication for an FABC system over OFDM carriers, and obtain the asymptotic capacity bounds in closed form when the number of subcarriers is sufficiently large. The authors in \cite{BackFiSigcom15} build an FABC system prototype in which the WiFi access point (AP) decodes the received backscattered signal while simultaneously transmitting WiFi packages to its legacy client. However, only a single BD is considered in \cite{DarsenaVerdeTCOM17} and \cite{BackFiSigcom15}, which simplifies the analysis and implementation but limits the applicability in practice. 

 The aforementioned prior works mainly focus on the transceiver design and hardware prototyping for various single-BD AmBC systems. To our best knowledge, the existing literature still lacks fundamental analysis and performance optimization for a general FABC system with multiple BDs.

 In this paper, we consider a full-duplex AmBC network (F-ABCN) over ambient OFDM carriers as shown in Fig.~\ref{fig:Fig1}, consisting of a full-duplex access point (FAP) with two antennas for simultaneous signal transmission and reception, respectively, a legacy user (LU), and multiple BDs. The FAP transmits dowlink signal which not only carries information to the LU but also transfers energy to the BDs; while at the same time all BDs perform uplink information transmission via backscattering in a time-division-multiple-access (TDMA) manner. The backscattered signal in general interferes with the LU's received information signal directly from the FAP. Thus, this proposed F-ABCN differs from the conventional full-duplex wireless-powered communication network (WPCN) in which the AP transmits solely downlink energy signal to all users in the first phase and each user uses its harvested energy to transmit uplink information signal via an additional RF transmitter in the second phase \cite{JuZHangTCOM14}. One typical application example of our considered F-ABCN is described as follows: a WiFi AP simultaneously transmits downlink information via OFDM modulation to its client(s) (e.g., smartphone, laptop) and receives uplink information from multiple domestic IoT devices (e.g., tags, sensors) in smart-home applications. We aim to optimize the throughput performance for a generic F-ABCN in this paper, where its main contributions are summarized as follows:
\begin{itemize}
\item First, to ensure fairness, we formulate a problem to maximize the minimum throughput among all BDs by jointly optimizing the BDs' backscatter time allocation, the BDs' power reflection coefficients, and the FAP's subcarrier power allocation, subject to the LU's throughput requirement and the BDs' harvested-energy constraints, together with other practical constraints. Such a joint optimization problem is practically appealing, since the system performance can benefit from adjusting design parameters in multiple dimensions. However, the formulated problem is non-trivial to solve in general, since the variables are mutually coupled and result in non-convex constraints.
\item Second, for the special case with a single BD, we obtain analytical solutions for the optimal resource allocation, and propose an efficient algorithm for obtaining it based on the Lagrange duality method. The optimal subcarrier power allocation is obtained in semi-closed form that provides useful insights to the optimal design. The convergence and complexity of the algorithm are also analyzed. 
\item Third, for the general case with multiple BDs, we propose an iterative algorithm by leveraging the block coordinated decent (BCD) and successive convex optimization (SCO) techniques. The entire optimization variables are partitioned into three blocks for the BDs' backscatter time allocation, the BDs' power reflection coefficients, and the FAP's subcarrier power allocation, respectively. The three blocks of variables are alternately optimized. However, for the non-convex subcarrier power allocation optimization problem with given backscatter time allocation and power reflection coefficients, we apply the SCO technique to solve it approximately. Also, we show the convergence of the proposed algorithm and analyze its complexity.
\item Fourth, we extend our study by characterizing the throughput region constituting all the Pareto-optimal throughput performance trade-offs among all BDs. Each boundary point of the throughput region is found by solving a sum-throughput maximization problem with a given throughput-profile vector.
\item Last, numerical results show that significant throughput gain is achieved by our proposed joint design, as compared to the benchmark scheme of F-ABCN with equal resource allocation and that of half-duplex AmBC network (H-ABCN) with optimal resource allocation. The BDs-LU throughput trade-off and the BDs' throughput-energy trade-off are revealed as well. Also, the effect of system parameters like the peak power value on the throughput performance is numerically demonstrated.
\end{itemize}

The rest of this paper is organized as follows. Section~\ref{systemmodel} presents the system model for an F-ABCN over ambient OFDM carriers. Section~\ref{formulation} formulates the minimum-throughput maximization problem. Section~\ref{solutionSU} analyzes the joint resource allocation for a single-BD F-ABCN and proposes an optimal algorithm by applying the Lagrange duality method. Section~\ref{solution} proposes an efficient iterative algorithm by applying the BCD and SCO techniques to solve the joint resource allocation problem for a multiple-BD F-ABCN. Section \ref{sec:region} studies the throughput region that characterizes the optimal throughput performance trade-offs among all BDs. Section~\ref{simulation} presents the numerical results to verify the performance of the proposed joint design. Section~\ref{conslusion} concludes this paper.

The main notations in this paper are listed as follows: The lowercase, boldface lowercase, and boldface uppercase letters, e.g., $g$, $\bg$, and $\bG$, denote a scalar, vector, and matrix, respectively. $|g|$ means the operation of taking the absolute value of a scalar $g$. $\bbE[g]$ denotes the statistical expectation of a random variable $g$. $[\bg]^T$ denotes the transpose of a vector $\bg$. The notation $\otimes$ means the convolution operation. $\nabla$ denotes the partial derivative operation. $\calC \calN(0, \sigma^2)$ denotes the circularly symmetric complex Gaussian (CSCG) distribution with zero mean and variance $\sigma^2$. $\calC$ denotes the set of complex numbers. $\calO(\cdot)$ denotes the time complexity order of an algorithm.


\section{System Model}\label{systemmodel}
In this section, we present the system model for an F-ABCN over ambient OFDM carriers. As illustrated in Fig.~\ref{fig:Fig1}, we consider two coexisting communication systems: the legacy communication system which consists of an FAP with two antennas for simultaneous information transmission and reception, respectively, together with its dedicated LU\footnote{We consider the case of a single LU, since the FAP typically transmits to an LU in a short period for practical OFDM systems like WiFi. The analyses and results can be extended to the case of multiple LUs.}, and the AmBC system which consists of the FAP and $M$ ($M \geq 1$) BDs. The FAP transmits OFDM signals to the LU. We are interested in the AmBC system in which each BD transmits its modulated signal back to the FAP over its received ambient OFDM carrier from the FAP. 
Each BD contains a backscatter antenna, a switched load impedance, a micro-controller, an information receiver, an energy harvester, and other modules (e.g., battery, memory, sensing). To transmit information bits, the BD modulates its received ambient OFDM carrier by intentionally switching the load impedance to vary the amplitude and/or phase of its backscattered signal, and the backscattered signal is received and finally decoded by the FAP.

\begin{figure}[t!]
\centering
\includegraphics[width=.99\columnwidth] {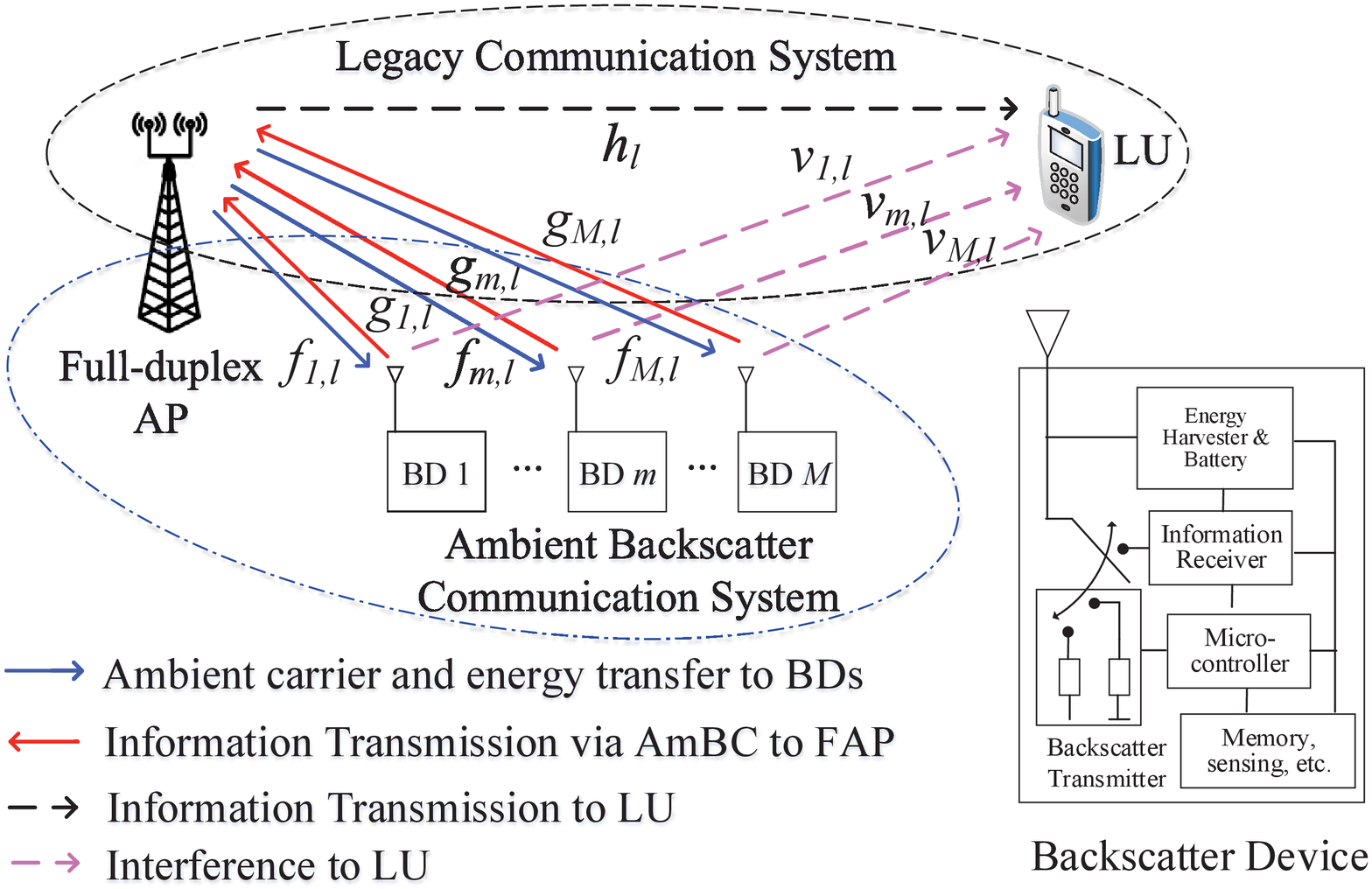}
\caption{System description for an F-ABCN.}
\label{fig:Fig1}
\vspace{-0.2cm}
\end{figure}

The block fading channel model is considered, and the channel block length is assumed to be much longer than the OFDM symbol period. As shown in Fig.~\ref{fig:Fig1}, let $f_{m, l}$ be the $L_{\sf f}$-path forward channel response from the FAP to the $m$-th BD, for $m=1,\ \ldots, \ M$, $g_{m, l}$ be the $L_{\sf g}$-path backward channel response from the $m$-th BD to the FAP, $h_{l}$ be the $L_{\sf h}$-path legacy channel response from the FAP to the LU, and $v_{m, l}$ be the $L_{\sf v}$-path interference channel response from the $m$-th BD to the LU. Let $N (N \geq 1)$ be the number of subcarriers of the transmitted OFDM signals. For the downlink channel from the FAP to the $m$-th BD, we define the frequency response at the $k$-th subcarrier as $F_{m,k} = \sum_{l=0}^{L_{\sf f}-1} f_{m,l}e^{\frac{-j 2\pi k l}{N}}$, for $k = 0,\ldots ,N-1$. Similarly, for the backward channel from the $m$-th BD to the FAP, we define its subcarrier response as $G_{m,k} = \sum_{l=0}^{L_{\sf g}-1} g_{m,l}e^{\frac{-j 2\pi k l}{N}}$; for the interference channel from the $m$-th BD to the LU, we define its subcarrier response as $V_{m,k} = \sum_{l=0}^{L_{\sf v}-1} v_{m,l}e^{\frac{- j 2 \pi k l }{N}}$; and for the legacy channel from the FAP to the LU, we define its subcarrier response as $H_{k} = \sum_{l=0}^{L_{\sf h}-1} h_{l}e^{\frac{- j 2 \pi k l }{N}}$.

We consider a frame-based protocol as shown in Fig.~\ref{fig:Fig2}. The frame duration of $T$ (seconds) is within the channel block length. In each frame consisting of $M$ slots, the FAP simultaneously transmits downlink OFDM signals to the LU, and receives uplink signals backscattered from all BDs in a TDMA manner. The $m$-th slot of time duration $\tau_m T$ (with time proportion $\tau_m$ ($0 \leq \tau_m \leq 1$)) is assigned to the $m$-th BD. Denote the backscatter time allocation vector $\btau=[\tau_1 \ \tau_2 \ \ldots \ \tau_M]^T$. In the $m$-th slot, BD $m$ reflects back a portion of its incident signal for transmitting information to the FAP and harvests energy from the remaining incident signal, and all other BDs only harvest energy from their received OFDM signals.

\begin{figure}[t!]
\vspace{0.1cm}
\centering
\includegraphics[width=.99\columnwidth] {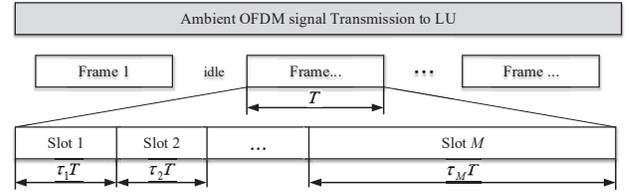}
\caption{Frame-based protocol for an F-ABCN.}
\label{fig:Fig2}
\vspace{-0.4cm}
\end{figure}

Let $S_{m,k}(n) \in \calC$ be the FAP's information symbol at the $k$-th subcarrier, $\forall k$, in the $n$-th OFDM symbol period of the $m$-th slot. After inverse discrete Fourier transform (IDFT) at the FAP, a CP of length $N_{\sf cp}$ is added at the beginning of each OFDM symbol. The transmitted time-domain signal in each OFDM symbol period is
\begin{align}
  s_{m, t} (n) = \frac{1}{N}\sum_{k=0}^{N-1} \sqrt{P_{m,k}} S_{m, k} (n) e^{j 2\pi \frac{kt}{N}},
\end{align}
for the time index $t =0, 1, \ldots, N -1$, where $P_{m,k}$ is the allocated power at the $k$-th subcarrier in the $m$-th slot. Denote the subcarrier power allocation matrix $\bP=[\bp_1 \ \bp_2 \ \ldots \ \bp_M]$, where $\bp_m$ is the subcarrier power allocation vector in the $m$-th slot.

In the $m$-th slot, the incident signal at BD $m$ is $s_{m, t} (n) \otimes f_{m,l}$. From \cite{BoyerSumit14}, due to the impedance discontinuity of the antenna and the load, a proportion $\alpha_m$ ($0 \leq \alpha_m \leq 1$, referred to as the power reflection coefficient) of the incident power is reflected backward, giving rise to the backscattered field, while the remaining $(1-\alpha_m)$ power propagates to the energy-harvesting circuit. For convenience, denote the power reflection coefficient vector $\balpha=[\alpha_1 \ \alpha_2 \ldots \alpha_M]^T$. Let $\eta_m$ ($0 \leq \eta_m \leq 1$), $\forall m$, be the energy-harvesting efficiency constant~\cite{ZhouZhangWirelessCom14,YangBackscatter15,KimDIKimTWC17} of BD $m$. According to the aforementioned energy-harvesting scheme in the proposed protocol and from~\cite{ZhouZhangWirelessCom14}, the total energy harvested by BD $m$ in all slots is thus
\begin{align}
  &E_m (\btau, \alpha_m, \bP)   \label{eq:erergyEm} \\
  & =\eta_m \! \sum_{k=0}^{N-1} \!  |F_{m,k}|^{2} \big[ \tau_{m}  P_{m,k} (1 \!-\! \alpha_m)   \!+\! \sum_{r=1, \ r \neq m}^M  \tau_{r} P_{r, k} \big], \nonumber
\end{align} 
where the first term in the square brackets relates to the harvested energy in the $m$-th slot, and the second term relates to the harvested energy in all other slots.

From the antenna scatterer theorem~\cite{FuschiniFalciaseccaAWP08}, the electronic-magnetic (EM) field backscattered from the $m$-th BD consists of the structural mode (load-independent) component and the antenna mode (load-dependent) component. The former is interpreted as the scattering from the antenna loaded with a reference impedance\footnote{The reference impedance $Z_{\sf ref}$ can be arbitrary, which is typically taken as $0, \infty$, and the antenna impedance $Z_{\sf a}$ for the short-circuit case, the open-circuit case, and the matched circuit case, respectively~\cite{FuschiniFalciaseccaAWP08}.}, which depends on only the antenna's geometry and material. The latter relates to the rest scattering of the antenna, which depends on the specific impedance of the load connected to the antenna. Let $X_m(n) \in \calC$ be the $m$-th BD's information symbol, whose duration is designed to be the same as the OFDM symbol period. We assume that each BD can align the transmission of its own symbol $X_m(n)$ with its received OFDM symbol\footnote{BD can practically estimate the arrival time of OFDM signal by some methods like the scheme that utilizes the repeating structure of CP~\cite{YangLiangZhangPeiTCOM17}.}. The signal backscattered by the $m$-th BD, denoted by $r_{m,t}(n)$, can be written as \cite{BoyerSumit14}
\begin{align}
r_{m,t}(n)  =  s_{m, t} (n) \otimes f_{m,l}(A_{\sf s}-\Gamma_m(n)),
\end{align}
where $A_{\sf s} \in \calC$ is the structural mode component, and the antenna mode component, denoted as $\Gamma_m(n)$, is defined as $\Gamma_m(n) \triangleq -\sqrt{\alpha_m} X_m(n)$ \cite{DarsenaVerdeTCOM17}. Since the structural mode component is fixed for each BD, it can be reconstructed and subtracted from the received signal at the FAP. Hence, for simplicity, we ignore the structural mode component and denote the backscattered signal as $\tilr_{m,t}(n)  = \sqrt{\alpha_m} s_{m, t} (n) \otimes f_{m,l} X_m(n)$ in the sequel.

Since the transmitted downlink signal $s_{m,t}(n)$ is known by the FAP's receiving chain, it can also be reconstructed and subtracted from the received signal. Therefore, the self-interference can be cancelled by using existing digital or analog cancellation techniques \cite{BackFiSigcom15}. For this reason, we assume perfect self-interference cancellation (SIC) at the FAP in this paper. After performing SIC, the received time-domain signal backscattered from the $m$-th BD is given by
\begin{align}
y_{m, t} (n) \!=\! \sqrt{\alpha_m} s_{m, t} (n) \otimes f_{m,l} \otimes g_{m,l} X_m(n) \!+\! w_{m, t} (n), 
\end{align}
where $w_{m, t} (n)$ denotes the additive white Gaussian noise (AWGN) with power $\sigma^2$, i.e., $w_{m,t}(n) \sim \calC \calN(0, \sigma^2)$.

After CP removal and discrete Fourier transform (DFT) at the FAP, the received frequency-domain signal is 
\begin{align}\label{eq:FD_RX-AP}
  &Y_{m, k} (n) = \\
  &\quad \sqrt{P_{m,k}} \sqrt{\alpha_m} F_{m, k} G_{m,k} S_{m, k} (n) X_m (n) + W_{m, k} (n), \nonumber
\end{align}
where the frequency-domain noise $W_{m,k} (n) \sim \calC \calN(0, \sigma^2)$.

The FAP performs maximum-ratio-combining (MRC) to recover the BD symbol $X_m(n)$ as follows,
\begin{align}
  \hatX_m (n) =\frac{1}{N} \sum \limits_{k=0}^{N-1} \frac{Y_{m, k} (n)}{\sqrt{P_{m,k}} \sqrt{\alpha_m} F_{m,k} G_{m,k}S_{m,k}(n)},
\end{align}
and the resulted decoding signal-to-noise-ratio (SNR) is 
\begin{align}
\gamma_m (\alpha_m, \bP) = \frac{\alpha_m}{\sigma ^{2}} \sum \limits_{k=0}^{N-1} |F_{m,k} G_{m,k}|^{2}P_{m,k}.
\end{align}

Hence, the $m$-th BD's throughput\footnote{This paper adopts normalized throughput with unit of bits-per-second-per-Hertz (bps/Hz).} normalized to the frame duration $T$ is 
\begin{align}\label{eq:RateBDm}
  &R_m (\tau_m, \alpha_m, \bp_m)= \nonumber \\
  &\quad \frac{\tau_{m}}{N} \log \left(1+ \frac{\alpha_m}{\sigma^{2}} \sum_{k=0}^{N-1} |F_{m,k} G_{m,k}|^{2}P_{m,k} \right).
\end{align}

Since the backscattered signal is transmitted at the same frequency as the downlink signal in the legacy system, the whole system in Fig.~\ref{fig:Fig1} is indeed a spectrum sharing system~\cite{KangLiangICC17, YCLiangTVT15, HanLiangSRTCOM15, ZhangLiangWCOM16}. The LU receives the superposition of the downlink legacy signal and the backscatter-link signal. Similar to \eqref{eq:FD_RX-AP}, the received frequency-domain signal at the LU can be thus written as follows,
\begin{align}\label{eq:FD_RX-LU}
 & Z_{m, k} (n) = \sqrt{P_{m,k}} H_{k} S_{m, k} (n)  +... \\
 &\sqrt{P_{m,k}} \sqrt{\alpha_m} F_{m, k} V_{m,k} S_{m, k} (n) X_m (n) + \tilW_{m, k} (n), \forall k, m \nonumber
\end{align}
where the frequency-domain noise $\tilW_{m,k} (n) \sim \calC \calN(0, \sigma^2)$.

By treating backscatter-link signal as interference, the total throughput of the LU is given by
\begin{align}\label{eq:RateLU}
  &\tilR(\btau, \balpha, \bP) = \\
  &\frac{1}{N}\sum_{m=1}^M \tau_m \sum_{k=0}^{N-1}  \log \left(1+\frac{|H_{k}|^2 P_{m,k} }{\alpha_m  |F_{m,k}V_{m,k}|^2 P_{m,k} + \sigma^2}\right). \nonumber
\end{align}

\section{Problem Formulation}\label{formulation}


Our objective is to maximize the minimum throughput among all BDs, by jointly optimizing the BDs' backscatter time allocation  (i.e., $\btau$), the BD's power reflection coefficients (i.e., $\balpha$), and the FAP's subcarrier power allocation (i.e., $\bP$). Mathematically, the optimization problem is equivalently formulated as follows,
\begin{subequations}
\label{eq:P1}
\begin{align}
&\underset{Q, \btau, \balpha, \bP}{\max}  \quad Q \\
& \text{s.t.} \quad \frac{\tau_{m}}{N} \!\log\! \!\left(\!1 \!+\ \! \frac{\alpha_m}{\sigma^{2}} \sum_{k=0}^{N-1} |F_{m,k} G_{m,k}|^{2}P_{m,k} \!\right)\!  \!\geq\! Q,  \forall m \label{eq:C1P1}\\
& \!\sum_{m=1}^M\! \frac{\tau_m}{N} \sum_{k=0}^{N-1}  \!\log\! \!\left(\!1 \!+\! \frac{|H_{k}|^2 P_{m,k} }{\alpha_m  |F_{m,k}V_{m,k}|^2 P_{m,k} \!+\! \sigma^2}\!\right)\!  \!\geq\! D \label{eq:C2P1} \\
&\eta_m \!\sum_{k=0}^{N-1}\!  |F_{m,k}|^{2} \!\Big[\! \tau_{m}  P_{m,k} ( 1\!-\! \alpha_m ) \!+\! \!\sum_{r=1, \ r \neq m}^M\!  \tau_r P_{r,k} \!\Big]\! \nonumber \\
& \quad \quad \quad \quad \quad \!\geq\! E_{\min, m}, \quad \forall m \label{eq:C3P1} \\
& \sum_{m=1}^M  \sum_{k=0}^{N-1} \tau_{m}P_{m,k}\leq \barP \label{eq:C4P1}\\
& \sum_{m=1}^M \tau_{m}\leq 1 \label{eq:C5P1} \\
&  \tau_{m}\geq  0, \quad \forall m \label{eq:C7P1}\\
& 0 \leq P_{m,k} \leq  P_{\sf peak}, \quad \forall m, \ k \label{eq:C6P1} \\
&  0 \leq \alpha_m \leq  1, \quad \forall m. \label{eq:C8P1}
\end{align}
\end{subequations}
Note that \eqref{eq:C1P1} is the common-throughput constraint for each BD, \eqref{eq:C2P1} is the LU's requirement of a given minimum throughput $D$; \eqref{eq:C3P1} is each BD's requirement of a given minimum energy $E_{\min, m}$; \eqref{eq:C4P1} is the FAP's maximum (total) transmission-power (i.e., a given value $\barP$) constraint; \eqref{eq:C5P1} is the total backscatter-time constraint, and \eqref{eq:C7P1} is the non-negative constraint for each backscatter time; \eqref{eq:C6P1} is the non-negative and peak-power (i.e., a given value $P_{\sf peak}$) constraint for each subcarrier power; and \eqref{eq:C8P1} is the constraint for each power reflection coefficient.

The above joint optimization problem is practically appealing. On one hand, by properly designing the power reflection coefficients of near BDs, more backscatter time can be allocated to far BDs to further enhance their throughput performance, alleviating the effect of double near-far problem for wireless-powered (backscatter) communication networks \cite{JuZHangTCOM14,KimDIKimTWC17}. On the other hand, by properly allocating subcarrier power at the FAP, better throughput trade-off can be achieved among the BDs and the LU. However, problem \eqref{eq:P1} is challenging to solve, due to the following two reasons. First, the backscatter time allocation variables $\tau_{m}$'s, the power reflection coefficient variables $\alpha_m$'s and the subcarrier power variables $P_{m,k}$'s are all coupled in the constraints \eqref{eq:C1P1}, \eqref{eq:C2P1}, \eqref{eq:C3P1}, and \eqref{eq:C4P1}. Second, the logarithm function in the constraint \eqref{eq:C2P1} is a non-convex function of the subcarrier power variables $P_{m,k}$'s. Therefore, problem \eqref{eq:P1} is non-convex, which is difficult to solve optimally in general.


\section{Joint Resource Allocation in a Single-BD F-ABCN}\label{solutionSU}
To obtain tractable analytical results, in this section, we consider the special case of $M=1$, i.e., a single-BD F-ABCN. For brevity, the subscript $m$ for BD is omitted in the notations, as $m=1$. The transmission power allocation matrix $\bP$, the power reflection coefficient vector $\balpha$ and the backscatter time allocation vector $\btau$ reduce to the vector $\bp=[P_0, P_1, \ \ldots, P_{N-1}]^T$, the scaler $\alpha$ and the constant $\tau=1$ for the BD, respectively. Problem \eqref{eq:P1} is then simplified as follows, 
\begin{subequations}
\label{eq:P2}
\begin{align}
\underset{\alpha, \bp}{\max}  &\quad \frac{1}{N} \!\log\! \!\left(\!1 \!+\ \! \frac{\alpha}{\sigma^{2}} \sum_{k=0}^{N-1} |F_{k} G_{k}|^{2}P_{k} \!\right) \label{eq:ObjP2} \\
\text{s.t.} &\quad  \frac{1}{N} \sum_{k=0}^{N-1}  \!\log\! \!\left(\!1 \!+\! \frac{|H_{k}|^2 P_{k} }{\alpha  |F_{k}V_{k}|^2 P_{k} \!+\! \sigma^2}\!\right)\!  \!\geq\! D \label{eq:C1P2} \\
&\quad \eta \!\sum_{k=0}^{N-1}\!  |F_{k}|^{2} P_{k} ( 1\!-\! \alpha) \geq\! E_{\min} \label{eq:C2P2} \\
& \quad \sum_{k=0}^{N-1} P_{k}\leq \barP \label{eq:C3P2}\\
& \quad 0 \leq P_{k} \leq  P_{\sf peak}, \quad \forall \ k \label{eq:C4P2} \\
&  \quad 0 \leq \alpha \leq  1. \label{eq:C5P2}
\end{align}
\end{subequations}

Since the objective function in \eqref{eq:ObjP2} and the constraint functions in \eqref{eq:C1P2} and \eqref{eq:C2P2} are all monotonically increasing with respect to each individual $P_k$, thus the constraint in \eqref{eq:C3P2} should hold with equality at the optimal power allocation (otherwise, the objective function together with the left-hand-sides (LHSs) of the constraints in \eqref{eq:C1P2} and \eqref{eq:C2P2} can be further increased by increasing some $P_k$'s).

To obtain useful insights, we further assume that the interference from the BD to the LU is negligible, i.e., $\alpha  |F_{k}V_{k}|^2 P_{k} \approx 0$. This assumption is practical, since the interference signal goes through the FAP-to-BD channel fading, the power reflection loss at the BD, and the BD-to-LU channel fading, usually leading to much smaller interference power at the LU compared to the signal directly from the FAP. The general case of non-negligible interference will be studied in Section~\ref{solution}. The optimal $\alpha$ of problem~\eqref{eq:P2} can be obtained by one-dimensional search, and we focus on optimizing the subcarrier power $\bp$ in the rest of this section. Since the logarithm function in \eqref{eq:ObjP2} is monotonically increasing with its argument, problem~\eqref{eq:P2} for given $\alpha$ can be rewritten as 
\begin{subequations}
\label{eq:P2A}
\begin{align}
\underset{\bp}{\max}  \quad & \sum_{k=0}^{N-1} |F_{k} G_{k}|^{2}P_{k} \label{eq:ObjP2A} \\ 
 \text{s.t.} \quad & \frac{1}{N} \sum_{k=0}^{N-1}  \!\log\! \!\left(\!1 \!+\! \frac{|H_{k}|^2 P_{k} }{\sigma^2}\!\right)\!  \!\geq\! D \label{eq:C1P2A} \\
&\eta \!\sum_{k=0}^{N-1}\!  |F_{k}|^{2} P_{k} ( 1\!-\! \alpha) \geq\! E_{\min} \label{eq:C2P2A} \\
& \sum_{k=0}^{N-1} P_{k}= \barP \label{eq:C3P2A}\\
& 0 \leq P_{k} \leq  P_{\sf peak}, \quad \forall \ k. \label{eq:C4P2A}
\end{align}
\end{subequations}
It can be easily checked that problem~\eqref{eq:P2A} is a convex optimization problem with respect to $\bp$, thus can be solved by the Lagrange duality method, as shown as follows.

From \eqref{eq:ObjP2A}, \eqref{eq:C1P2A}, \eqref{eq:C2P2A} and \eqref{eq:C3P2A}, the Lagrangian of problem~\eqref{eq:P2A} is given by
\begin{align}
&\calL(\bp,\lambda,\theta,\mu) = \sum_{k=0}^{N-1} |F_{k} G_{k}|^{2}P_{k} +... \label{eq:Lagrangian}\\
& \qquad  \lambda \left(\frac{1}{N} \sum_{k=0}^{N-1}  \!\log\! \!\left(\!1 \!+\! \frac{|H_{k}|^2 P_{k} }{\sigma^2}\!\right)\!-D\right)+...\nonumber \\
& \theta \left( \eta \! \sum_{k=0}^{N-1}\!  |F_{k}|^{2} P_{k} ( 1\!-\! \alpha) \!-\! E_{\min} \right)- \mu \left( \sum_{k=0}^{N-1} P_{k} \!-\! \barP\right), \nonumber
\end{align}
where $\lambda \geq 0, \ \theta \geq 0$ and $\mu$ denotes the dual variables associated with \eqref{eq:C1P2A}, \eqref{eq:C2P2A}, and \eqref{eq:C3P2A}, respectively. The dual function of problem~\eqref{eq:P2A} is then given by 
\begin{align}\label{eq:dualfunc}
  \calG(\lambda,\theta,\mu) = \underset{0 \leq P_{k} \leq  P_{\sf peak}, \forall k}{\max} \quad \calL(\bp,\lambda,\theta,\mu).
\end{align}
The dual problem of problem~\eqref{eq:P2A} is thus give by $\underset{\lambda \geq 0, \theta \geq 0, \mu}{\min} \quad \calG(\lambda,\theta,\mu)$.

\begin{mythe}\label{the:OptP}
Given $\lambda \geq 0, \theta \geq 0$ and $\mu$, the maximizer of $\calG(\bp, \lambda,\theta,\mu)$ in \eqref{eq:Lagrangian} is given by
\begin{align}\label{eq:OptPk}
  &P_k^{\star} = \min \Big[P_{\sf peak}, \\
  &\quad \left( \frac{\lambda}{N(\mu-|F_k G_k|^2 - \theta \eta |F_k|^2 (1-\alpha))}-\frac{\sigma^2}{|H_k|^2}\right)^{+}\Big], \nonumber
\end{align}
where $(x)^{+}=\max(x, 0)$.
\end{mythe}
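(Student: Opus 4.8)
The plan is to exploit the fact that the Lagrangian $\calL(\bp,\lambda,\theta,\mu)$ in \eqref{eq:Lagrangian} is \emph{separable} across subcarriers. Collecting all terms that depend on a single $P_k$, one can write $\calL(\bp,\lambda,\theta,\mu) = \sum_{k=0}^{N-1} \calL_k(P_k) - \lambda D - \theta E_{\min} + \mu \barP$, where the constant tail does not affect the maximization and
\begin{align*}
\calL_k(P_k) = \big(|F_k G_k|^2 + \theta\eta|F_k|^2(1-\alpha) - \mu\big) P_k + \frac{\lambda}{N}\log\!\Big(1 + \frac{|H_k|^2 P_k}{\sigma^2}\Big).
\end{align*}
Since the box constraints $0 \le P_k \le P_{\sf peak}$ decouple across $k$ as well, the maximization defining the dual function in \eqref{eq:dualfunc} splits into $N$ independent scalar problems $\max_{0\le P_k\le P_{\sf peak}} \calL_k(P_k)$, which I would solve one at a time.

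For each scalar problem I would first observe that $\calL_k$ is concave in $P_k$, being the sum of an affine term and $\frac{\lambda}{N}\log(\cdot)$ with $\lambda \ge 0$ (the logarithm is concave and its argument is affine and increasing in $P_k$). Concavity guarantees that a stationary point lying in the feasible interval is the global maximizer, and that otherwise the maximizer is attained at the nearer endpoint. I would then set $\partial \calL_k/\partial P_k$ to zero, obtaining
\begin{align*}
\frac{\lambda}{N}\,\frac{|H_k|^2}{\sigma^2 + |H_k|^2 P_k} = \mu - |F_k G_k|^2 - \theta\eta|F_k|^2(1-\alpha),
\end{align*}
and solving for $P_k$ yields the interior expression $\frac{\lambda}{N(\mu - |F_k G_k|^2 - \theta\eta|F_k|^2(1-\alpha))} - \frac{\sigma^2}{|H_k|^2}$ that appears inside \eqref{eq:OptPk}.

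Finally, to impose $0\le P_k\le P_{\sf peak}$ I would invoke the standard fact that the constrained maximizer of a concave function over an interval is the projection of the unconstrained stationary point onto that interval: clipping a negative value up to $0$ gives the $(\cdot)^{+}$ operation, and capping a value exceeding $P_{\sf peak}$ gives the outer $\min$, which together reproduce \eqref{eq:OptPk}. The step needing the most care is the sign of the denominator $c_k := \mu - |F_k G_k|^2 - \theta\eta|F_k|^2(1-\alpha)$: the displayed formula is the genuine maximizer only when $c_k>0$, for then $\partial\calL_k/\partial P_k$ is strictly decreasing and crosses zero. When $c_k \le 0$ the derivative $-c_k + \frac{\lambda}{N}\frac{|H_k|^2}{\sigma^2+|H_k|^2 P_k}$ stays positive on $[0,P_{\sf peak}]$, so $\calL_k$ is increasing and the maximizer is simply $P_{\sf peak}$; I would either treat this boundary case separately or argue that the optimal $\mu$ is large enough to make $c_k>0$ for every $k$, so that the single closed form \eqref{eq:OptPk} applies uniformly.
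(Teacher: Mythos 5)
Your proof is correct and follows essentially the same route as the paper's: decompose the Lagrangian in \eqref{eq:Lagrangian} into per-subcarrier terms $\calL_k(P_k)$, maximize each one over $0 \le P_k \le P_{\sf peak}$ via the stationarity condition, and clip the resulting stationary point to the box to obtain \eqref{eq:OptPk}. Your closing caveat is in fact \emph{more} careful than the paper's own argument: the paper sets $\partial \calL_k/\partial P_k = 0$ and immediately asserts \eqref{eq:OptPk} without the concavity justification and without addressing the case $c_k := \mu - |F_k G_k|^2 - \theta\eta|F_k|^2(1-\alpha) \le 0$, where (as you observe) the displayed formula returns $0$ while the true maximizer is $P_{\sf peak}$; the paper's subsequent remark that $\mu > 0$ at the optimum only partially covers this, since $c_k > 0$ actually requires $\mu > |F_k G_k|^2 + \theta\eta|F_k|^2(1-\alpha)$ for every $k$.
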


\begin{proof}
  Please see Appendix \ref{app:Theorem1}.
\end{proof}

We conclude that $\mu>0$, since Theorem~\ref{the:OptP} implies $P_k^{\star}=0, \forall k$, and the objective value is zero, if $\mu \leq 0$, which is in contradiction with the optimality of $\{P_k^{\star}\}$'s.

From Theorem~\ref{the:OptP}, the optimal solution of problem \eqref{eq:P2A} can be obtained as follows. With $\calG(\lambda,\theta,\mu)$ obtained for each given  pair of $\lambda, \theta$ and $\mu$, the optimal dual variables $\lambda, \theta$ and $\mu$ that minimize $\calG(\lambda,\theta,\mu)$ can then be efficiently obtained by a sub-gradient based algorithm, with the sub-gradient of $\calG(\lambda,\theta,\mu)$ given by 
\begin{subequations}\label{eq:sugradient}
  \begin{align}
\nabla\lambda &=\frac{1}{N} \sum_{k=0}^{N-1}  \!\log\! \!\left(\!1 \!+\! \frac{|H_{k}|^2 P_{k} }{\sigma^2}\!\right)\!-D \\
\nabla \theta &=\sum_{k=0}^{N-1}\!  |F_{k}|^{2} P_{k} ( 1\!-\! \alpha) \!-\! E_{\min} \\
\nabla \mu &= \barP - \sum_{k=0}^{N-1} P_{k}.
\end{align}
\end{subequations}

The overall steps for solving problem \eqref{eq:P2A} are summarized in Algorithm \ref{Algorithm_SU}. Since problem \eqref{eq:P2A} is convex, the proposed Algorithm \ref{Algorithm_SU} is guaranteed to converge \cite{CVXBoyd04}. The computation time of Algorithm \ref{Algorithm_SU} is analyzed as follows. The time complexity of step 3  is $\calO(N)$, and those of step 4 and step 6 are $\calO(1)$. Since only three dual variables, $\lambda, \ \theta, \ \mu$, are updated by the sub-gradient method regardless of the number of BDs, $M$. The time complexity of step 5 is thus $\calO(1)$. As a result, the total time complexity of Algorithm \ref{Algorithm_SU} is $\calO(N)$.

\begin{algorithm}[t!]
\caption{Iterative algorithm for solving problem \eqref{eq:P2A}}\label{Algorithm_SU}
\begin{algorithmic}[1]
\STATE Initialize dual variables $\lambda^{\{0\}}>0, \theta^{\{0\}}>0, \mu^{\{0\}}$, positive step-sizes $\xi_1, \ \xi_2, \ \xi_3$, and small threshold constant $\epsilon=10^{-4}$. Let $i=0$.\\ 
\REPEAT
\STATE Given $\lambda^{\{i\}},\theta^{\{i\}}$ and $\mu^{\{i\}}$, compute $\bp^{\{i\}}$ by using \eqref{eq:OptPk}, and obtain the corresponding dual function value $\calG^{\{i\}}=\calG(\lambda^{\{i\}},\theta^{\{i\}},\mu^{\{i\}})$ as in~\eqref{eq:dualfunc}.
\STATE Compute the sub-gradients $\Delta\lambda^{\{i\}}, \Delta \theta^{\{i\}}$ and $\Delta \mu^{\{i\}}$ given in \eqref{eq:sugradient} by replacing $P_k$ by $P_k^{\{i\}}$.
\STATE Update dual variables
\begin{align}
\lambda^{\{i+1\}}&=\lambda^{\{i\}}+\xi_{1} \Delta\lambda^{\{i\}} \nonumber \\
\theta^{\{i+1\}}&=\theta^{\{i\}}+\xi_{2}\Delta \theta^{\{i\}} \nonumber \\
\mu^{\{i+1\}}&=\mu^{\{i\}}+\xi_{3}\Delta \mu^{\{i\}} \nonumber
\end{align}
\STATE Update iteration index $i=i+1$.
\UNTIL{$  \left( \! \calG^{\{i-1\}} \!-\! \sum \limits_{k=0}^{N-1} |F_{k}|^{2}|G_{k}|^{2}P_k^{\{i-1\}} \! \right) / \calG^{\{i-1\}}  \!<\!  \epsilon$ }
\STATE Obtain the optimal subcarrier power allocation $\bp^{\star}=[P_0^{\{i-1\}}, \ldots, P_{N-1}^{\{i-1\}}]^T$.
\end{algorithmic}
\end{algorithm}

A numerical example is given here to demonstrate the optimal subcarrier power allocation. Fig. \ref{fig:FigSU} depicts the optimal $\bp^{\star}$ that maximizes the BD throughput in a single-BD F-ABCN with $N=16, N_{\sf cp}=8, \barP=1, \eta=0.5, \epsilon=10^{-4}, \sigma^2=-60 \ \text{dBm}, E_{\min}=10  \; \mu\text{J}$, and $D=2  \; \text{bps/Hz}$. We assume independent multi-path Rayleigh fading channels, and the power gains of multiple paths are exponentially distributed. The numbers of channel paths are set as $L_{\sf f}=L_{\sf g}=2$, and $L_{\sf h}=L_{\sf v}=4$. The FAP-to-BD distance, the BD-to-LU distance, and the FAP-to-LU distance are 4, 15, and 15 meters (m), respectively. Other parameters are set as the same as in Section \ref{simulation}. We consider two different peak-power values, $P_{\sf peak}=5 P_{\sf ave}$ and $P_{\sf peak}=10 P_{\sf ave}$ with $P_{\sf ave}=1/N$. For the case of $P_{\sf peak}=5 P_{\sf ave}$, we observe that 98.57\% of the total power is allocated to subcarriers 3 to 6, among which subcarriers 5 and 6 are allocated with peak power of 0.3125, subcarriers 3 and 4 are allocated with power of 0.1751 and 0.1856, respectively, and any other subcarrier's power is negligible. In contrast, for the case of $P_{\sf peak}=10 P_{\sf ave}$, we observe that the power allocation is more concentrated, and 95\% of the total power is allocated to subcarriers 5 and 6. Specifically, only subcarrier 6 is allocated with peak power of 0.6250, and the power at subcarrier 5 is 0.3245, while any other subcarrier's power is much smaller and can be ignored. The allocation criterion can be explained as follows. Under the peak-power constraints, power is allocated with priority to the subcarriers with stronger backscatter-link channel $|F_k G_k|^2$, conditioned on that the LU's throughput constraint and the BD's harvested-energy constraint are satisfied.

\begin{figure} [t]
\vspace{-0.4cm}
	\centering	\includegraphics[width=.99\columnwidth]{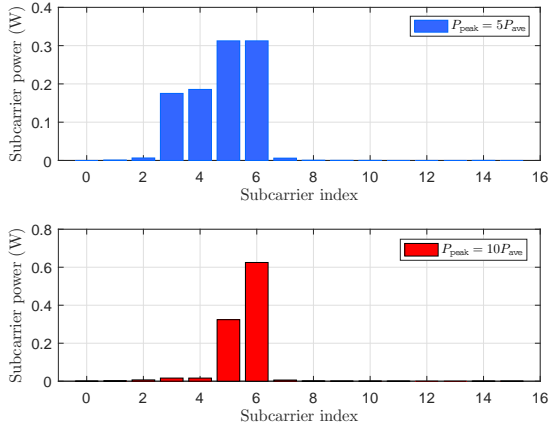}
		\caption{Optimal subcarrier power allocation for different peak-power constraints.} \label{fig:FigSU}
\vspace{-0.6cm}
\end{figure}


\section{Joint Resource Allocation in a Multiple-BD F-ABCN}\label{solution}
In this section, we consider the joint resource allocation in an F-ABCN with multiple BDs. In general, there is no standard method for optimally solving the non-convex optimization problem \eqref{eq:P1} efficiently. Hence, we propose an efficient iterative algorithm to solve it sub-optimally by applying the block coordinate descent (BCD)~\cite{HongLuoBCDSPM17} and successive convex optimization (SCO)~\cite{Beck2010} techniques. In each iteration, we optimize different blocks of variables alteratively. Specifically, for any given power reflection coefficient vector $\balpha$ and subcarrier power allocation matrix $\bP$, we optimize the backscatter time allocation vector $\btau$ by solving a linear programming (LP); for any given backscatter time allocation vector $\btau$ and subcarrier power allocation matrix $\bP$, we optimize the power reflection coefficient vector $\balpha$ by solving a convex problem; and for any given backscatter time allocation vector $\btau$ and power reflection coefficient vector $\balpha$, we optimize the subcarrier power allocation matrix $\bP$ by utilizing the SCO technique and solving an approximated convex problem. After presenting the overall algorithm, we show the convergence of the proposed algorithm and analyze its complexity.
\subsection{Backscatter Time Allocation Optimization}
In iteration $j, j \geq 1$, for given power reflection coefficient vector $\balpha^{\{j\}}$ and subcarrier power allocation matrix $\bP^{\{j\}}$,  the backscatter time allocation vector $\btau$ can be optimized by solving the following problem %
\begin{subequations}
\label{eq:P1BTA}
\begin{align}
&\underset{Q, \btau}{\max}  \quad Q \label{eq:ObjP1BTA} \\
&\text{s.t.}
 \frac{\tau_{m}}{N} \log \left(\!1+ \frac{\alpha_m^{\{j\}}}{\sigma^{2}} \sum_{k=0}^{N-1} |F_{m,k} G_{m,k}|^{2}P_{m,k}^{\{j\}} \!\right)  \! \geq \!Q, \ \forall m \label{eq:C1P1BTA}\\
&\sum_{m=1}^M \frac{ \tau_m }{N} \sum_{k=0}^{N-1}  \log \left( \! 1 \!+ \! \frac{|H_{k}|^2 P_{m,k}^{\{j\}} }{\alpha_m^{\{j\}}  |F_{m,k}V_{m,k}|^2 P_{m,k}^{\{j\}} \!+\! \sigma^2} \! \right)  \!\geq \! D \label{eq:C2P1BTA} \\
&\eta_m\sum_{k=0}^{N-1}  |F_{m,k}|^{2} \Big[ \tau_{m}  P_{m,k}^{\{j\}} (1-\alpha_m^{\{j\}}) +...\nonumber \\
 &\qquad \qquad \quad\sum_{r=1, \ r \neq m}^M  \tau_r P_{r,k}^{\{j\}} \Big] \geq E_{\min, m},\quad \forall m  \label{eq:C3P1BTA}\\
& \sum_{m=1}^M  \sum_{k=0}^{N-1} \tau_{m}P_{m,k}^{\{j\}}\leq \barP \label{eq:C4P1BTA}\\
&\sum_{m=1}^M \tau_{m}\leq 1 \label{eq:C5P1BTA}\\
& \tau_{m}\geq  0, \quad \forall m. \label{eq:C6P1BTA}
\end{align}
\end{subequations}
Since problem~\eqref{eq:P1BTA} is a standard LP, it can be solved efficiently by existing optimization tools such as CVX~\cite{CVXTool2016}. Moreover, it can be verified that either the constraint \eqref{eq:C4P1BTA} or \eqref{eq:C5P1BTA} is met with equality when the optimal $\btau$ is obtained for given $\balpha^{\{j\}}$ and $\bP^{\{j\}}$, since otherwise we can always increase $\tau_m$'s without decreasing the objective value.

\subsection{Reflection Power Allocation Optimization}
For given backscatter time allocation vector $\btau^{\{j\}}$ and subcarrier power allocation matrix $\bP^{\{j\}}$,  the power reflection coefficient vector $\balpha$ can be optimized by solving the following problem
\begin{subequations}
\label{eq:P1RPA}
\begin{align}
&\underset{Q, \balpha}{\max}  \quad Q \\
&\text{s.t.} \;  \frac{\tau_{m}^{\{j\}}}{N} \log \!\left(\!1 \!+\! \frac{\alpha_m}{\sigma^{2}} \sum_{k=0}^{N-1} |F_{m,k}G_{m,k}|^{2}P_{m,k}^{\{j\}} \!\right)  \! \geq \!Q, \forall m \label{eq:C1P1RPA}\\
&\sum_{m=1}^M \frac{ \tau_m^{\{j\}} }{N} \sum_{k=0}^{N-1}  \log \left( \! 1 \!+ \! \frac{|H_{k}|^2 P_{m,k}^{\{j\}} }{\alpha_m  |F_{m,k}V_{m,k}|^2 P_{m,k}^{\{j\}} \!+\! \sigma^2} \! \right)  \!\geq \! D \label{eq:C2P1RPA} \\
&\eta_m\sum_{k=0}^{N-1}  |F_{m,k}|^{2} \Big[ \tau_{m}^{\{j\}}  P_{m,k}^{\{j\}} (1-\alpha_m) +...\nonumber \\
 &\qquad \quad\sum_{r=1, \ r \neq m}^M  \tau_r^{\{j\}} P_{r,k}^{\{j\}} \Big] \geq E_{\min, m},\quad \forall m  \label{eq:C3P1RPA} \\
&\quad  0 \leq \alpha_m \leq  1, \quad \forall m. \label{eq:C4P1RPA}
\end{align}
\end{subequations}
Given $P_{m,k}^{\{j\}}$'s  and $\tau_m^{\{j\}}$'s, \eqref{eq:C1P1RPA} is a convex constraint, while \eqref{eq:C3P1RPA} and \eqref{eq:C4P1RPA} are linear constraints. Moreover, since the LHS of the constraint~\eqref{eq:C2P1RPA} is a decreasing and convex function of $\alpha_m$, this constraint is convex. Hence, problem \eqref{eq:P1RPA} is a convex optimization problem that can also be efficiently solved by CVX~\cite{CVXTool2016}.

\subsection{Subcarrier Power Allocation Optimization}
For given backscatter time allocation vector $\btau^{\{j\}}$ and power reflection coefficient vector $\balpha^{\{j\}}$,  the subcarrier power allocation matrix $\bP$ can be optimized by solving the following problem
\begin{subequations}
\label{eq:P1TPA}
\begin{align}
&\underset{Q, \bP}{\max}\quad Q \\
&\text{s.t.} \; \frac{\tau_{m}^{\{j\}}}{N} \log \!\left(\!1 \!+\! \frac{\alpha_m^{\{j\}}}{\sigma^{2}} \sum_{k=0}^{N-1} |F_{m,k} G_{m,k}|^{2}P_{m,k} \!\right)  \! \geq \!Q, \forall m \label{eq:C1P1TPA} \\
&\sum_{m=1}^M \frac{ \tau_m^{\{j\}} }{N} \sum_{k=0}^{N-1}  \log \left( \! 1 \!+ \! \frac{|H_{k}|^2 P_{m,k}}{\alpha_m^{\{j\}}   |F_{m,k}V_{m,k}|^2 P_{m,k}\!+\! \sigma^2} \! \right)  \!\geq \! D \label{eq:C2P1TPA}\\
&\eta_m\sum_{k=0}^{N-1}  |F_{m,k}|^{2} \Big[ \tau_{m}^{\{j\}}  P_{m,k} (1-\alpha_m^{\{j\}}) +...\nonumber \\
 &\qquad \quad\sum_{r=1, \ r \neq m}^M  \tau_r^{\{j\}} P_{r,k} \Big] \geq E_{\min, m},\quad \forall m  \label{eq:C3P1TPA}\\
&\quad \sum_{m=1}^M  \sum_{k=0}^{N-1} \tau_{m}^{\{j\}} P_{m,k}\leq \barP \label{eq:C4P1TPA}\\
&\quad 0 \leq P_{m,k} \leq  P_{\sf peak}, \quad \forall m, \ k \label{eq:C5P1TPA}
\end{align}
\end{subequations}
Since the constraint function $\tilR(\bP)|_{\btau^{\{j\}}, \balpha^{\{j\}}}$ in \eqref{eq:C2P1TPA} is non-convex with respect to $P_{m, k}$, problem \eqref{eq:P1TPA} is non-convex. Notice that the constraint function $\tilR(\bP)|_{\btau^{\{j\}}, \balpha^{\{j\}}}$ can be rewritten as 
\begin{align}\label{eq:SCA1}
  &\tilR(\bP)|_{\btau^{\{j\}}, \balpha^{\{j\}}} \nonumber \\
   &\!=\! \sum_{m=1}^M \frac{\tau_m^{\{j\}}}{N} \sum_{k=0}^{N-1}  \Big[\!-\!\log \!\left(\! \alpha_m^{\{j\}} |F_{m,k}V_{m,k}|^2 P_{m,k} \!+\! \sigma^2 \!\right)\! +\!...\!  \nonumber \\
  &\quad \log \left( \left(\alpha_m^{\{j\}}  |F_{m,k}V_{m,k}|^2 + |H_{k}|^2 \right) P_{m,k} + \sigma^2 \right) \Big].
\end{align}
To handle the non-convex constraint \eqref{eq:C2P1TPA}, we exploit the SCO technique \cite{Beck2010} to approximate the second logarithm function in \eqref{eq:SCA1}. Recall that any concave function can be globally upper-bounded by its first-order Taylor expansion at any point. Specifically, let $P_{m,k}^{\{j\}}$ denote the subcarrier power allocation matrix in the previous iteration. We have the following concave lower bound at the local point $P_{m,k}^{\{j\}}$
\begin{align}
  &\tilR(\bP)|_{\btau^{\{j\}}, \balpha^{\{j\}}, \bP^{\{j\}}}  \label{eq:SCALB} \!\geq\!\\
  & \sum_{m=1}^M\! \frac{\tau_m^{\{j\}}}{N} \!\sum_{k=0}^{N-1}\!  \Big[ -\! \log \left(\alpha^{\{j\}}  |F_{m,k}V_{m,k}|^2 P_{m,k}^{\{j\}} + \sigma^2\right)\! + \!  ... \nonumber \\
  &\!\log\! \left( \!\left(\!\alpha^{\{j\}}  |F_{m,k}V_{m,k}|^2 \!+\! |H_{k}|^2 \!\right)\! P_{m,k} \!+\! \sigma^2 \right) \! - \nonumber \\
  &\frac{\alpha^{\{j\}}  |F_{m,k}V_{m,k}|^2 (P_{m,k}\!-\! P_{m,k}^{\{j\}})}{\alpha^{\{j\}}  |F_{m,k}V_{m,k}|^2 P_{m,k}^{\{j\}} \!+\! \sigma^2}\Big] \!\triangleq\! \tilR^{\sf lb}(\bP)|_{\btau^{\{j\}}, \balpha^{\{j\}}, \bP^{\{j\}}}.\nonumber
\end{align}

With given local points $\bP^{\{j\}}$ and lower bound $\tilR^{\sf lb}(\bP)|_{\btau^{\{j\}}, \balpha^{\{j\}}, \bP^{\{j\}}}$ in \eqref{eq:SCALB}, by introducing the lower-bound minimum-throughput $Q_{\sf tpa}^{\sf lb}$, problem \eqref{eq:P1TPA} is approximated as the following problem 
\begin{subequations}
\label{eq:P1TPAA}
\begin{align}
&\underset{Q_{\sf tpa}^{\sf lb}, \bP}{\max}  \quad Q_{\sf tpa}^{\sf lb} \\
&\text{s.t.}  \frac{\tau_{m}^{\{j\}}}{N} \log \left(\!1 \!+\!  \frac{\alpha_m^{\{j\}}}{\sigma^{2}} \sum_{k=0}^{N-1} |F_{m,k} G_{m,k}|^{2}P_{m,k} \!\right)  \! \geq \! Q_{\sf tpa}^{\sf lb}, \forall m \label{eq:C1P1TPAA}\\
& \sum_{m=1}^M\! \frac{\tau_m^{\{j\}}}{N} \!\sum_{k=0}^{N-1}\!  \Big[ -\! \log \left(\alpha^{\{j\}}  |F_{m,k}V_{m,k}|^2 P_{m,k}^{\{j\}} + \sigma^2\right)\! + \!  ... \nonumber \\
  &\quad \quad \!\log\! \left( \!\left(\!\alpha^{\{j\}}  |F_{m,k}V_{m,k}|^2 \!+\! |H_{k}|^2 \!\right)\! P_{m,k} \!+\! \sigma^2 \right) \! -... \nonumber \\
  &\quad \quad \frac{\alpha^{\{j\}}  |F_{m,k}V_{m,k}|^2 (P_{m,k}\!-\! P_{m,k}^{\{j\}})}{\alpha^{\{j\}}  |F_{m,k}V_{m,k}|^2 P_{m,k}^{\{j\}} \!+\! \sigma^2}\Big]  \geq D, \label{eq:C2P1TPAA} \\
&\eta_m\sum_{k=0}^{N-1}  |F_{m,k}|^{2} \big[ \tau_{m}^{\{j\}}  P_{m,k} (1-\alpha_m^{\{j\}}) +...\nonumber \\
 &\qquad \quad\sum_{r=1, \ r \neq m}^M  \tau_r^{\{j\}} P_{r,k} \big] \geq E_{\min, m},\quad \forall m  \label{eq:C3P1TPAA} \\
&\sum_{m=1}^M  \sum_{k=0}^{N-1} \tau_{m}^{\{j\}} P_{m,k}\leq \barP \label{eq:C4P1TPAA}\\
& 0 \leq P_{m,k} \leq  P_{\sf peak}, \quad \forall m, \ k. \label{eq:C5P1TPAA}
\end{align}
\end{subequations}
Problem \eqref{eq:P1TPAA} is a convex optimization problem which can also be efficiently solved by CVX~\cite{CVXTool2016}. It is noticed that the lower bound adopted in \eqref{eq:C2P1TPAA} implies that the feasible set of problem \eqref{eq:P1TPAA} is always a subset of that of problem \eqref{eq:P1TPA}. As a result, the optimal objective value obtained from problem \eqref{eq:P1TPAA} is in general a lower bound of that of problem \eqref{eq:P1TPA}.

\subsection{Overall Algorithm}
We propose an overall iterative algorithm for problem \eqref{eq:P1} by applying the BCD technique~\cite{HongLuoBCDSPM17}. Specifically, the entire variables in original problem \eqref{eq:P1} are partitioned into three blocks, i.e., the backscatter time allocation vector $\btau$, power reflection coefficient vector $\balpha$, and subcarrier power allocation matrix $\bP$, which are alternately optimized by solving problem \eqref{eq:P1BTA}, \eqref{eq:P1RPA}, and \eqref{eq:P1TPAA} correspondingly in each iteration, while keeping the other two blocks of variables fixed. Furthermore, the obtained solution in each iteration is used as the input of the next iteration. The details are summarized in Algorithm \ref{AlgorithmP1}.

\begin{algorithm}[t!]
\caption{Block coordinate descent algorithm for solving problem \eqref{eq:P1}}\label{AlgorithmP1}
\begin{algorithmic}[1]
\STATE Initialize $\btau ^{\{0\}}, \ \balpha ^{\{0\}}, \ \bP^{\{0\}}, \ Q^{\{0\}}$ with $\tau ^{\{0\}}_m=\frac{1}{M}, \alpha ^{\{0\}}_m=0.5, P^{\{0\}}_{m, k}=\frac{1}{MN}, \ \forall k, m$, and small threshold constant $\epsilon=10^{-4}$. Let $j=0$. \\
\REPEAT
\STATE Solve problem \eqref{eq:P1BTA} for given $\balpha^{\{j\}}$ and $\bP^{\{j\}}$, and obtain the optimal solution as $\btau^{\{j+1\}}$.
\STATE Solve problem \eqref{eq:P1RPA} for given $\btau^{\{j+1\}}$ and $\bP^{\{j\}}$, and obtain the optimal solution as $\balpha^{\{j+1\}}$.
\STATE Solve problem \eqref{eq:P1TPAA} for given $\btau^{\{j+1\}}$, $\balpha^{\{j+1\}}$, and $\bP^{\{j\}}$, and obtain the optimal solution as $\bP^{\{j+1\}}$.
\STATE  Update iteration index $j=j+1$.
\UNTIL{The increase of the objective value is smaller than $\epsilon$}
\STATE  Return the optimal solution $\btau^{\star}=\btau^{\{j-1\}}$, $\balpha^{\star }=\balpha^{\{j-1\}}$, and $\bP^{\star}= \bP^{\{j-1\}}$.
\end{algorithmic}
\end{algorithm}

\subsection{Convergence and Complexity Analysis}
From~\cite{HongLuoBCDSPM17}, for the classic BCD method, the subproblem for updating each block of variables is required to be solved exactly with optimality in each iteration so as to guarantee its convergence. However, in our proposed Algorithm \ref{AlgorithmP1}, for subcarrier power allocation subproblem \eqref{eq:P1TPA}, we only solve its approximate problem \eqref{eq:P1TPAA} optimally. Thus, the convergence analysis for the classic BCD technique is not applicable to our case, and we prove the convergence of Algorithm \ref{AlgorithmP1} as follows.
\begin{mythe}
Algorithm \ref{AlgorithmP1} is guaranteed to converge.
\end{mythe}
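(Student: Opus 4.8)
The plan is to prove convergence by showing that the sequence of objective values produced by Algorithm~\ref{AlgorithmP1} is monotonically non-decreasing and bounded from above; a monotone bounded real sequence converges. Let $Q^{\{j\}}$ denote the common minimum-throughput value attained at the end of iteration $j$. Since each block update takes the current iterate as a feasible starting point, the heart of the argument is to verify that no one of the three updates can decrease the objective.

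First I would treat the two exactly-solved blocks. In the time-allocation update, the triple $(\btau^{\{j\}},\balpha^{\{j\}},\bP^{\{j\}})$ is feasible for the LP~\eqref{eq:P1BTA} and achieves $Q^{\{j\}}$, so its optimum is at least $Q^{\{j\}}$; likewise, after $\btau$ is updated, $\balpha^{\{j\}}$ remains feasible for the convex problem~\eqref{eq:P1RPA}, so the optimum of that step cannot fall below the value just attained. The delicate block is the SCO-based power update, where the surrogate problem~\eqref{eq:P1TPAA} is solved in place of the true problem~\eqref{eq:P1TPA}. Here I would lean on two properties of the first-order surrogate $\tilR^{\sf lb}$ in~\eqref{eq:SCALB}. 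The tightness property $\tilR^{\sf lb}(\bP^{\{j\}})=\tilR(\bP^{\{j\}})$ (the Taylor-correction term vanishes at the expansion point) shows that $\bP^{\{j\}}$ itself is feasible for~\eqref{eq:P1TPAA}, because the LU constraint was already met at $\bP^{\{j\}}$ under the updated $\btau,\balpha$ at the end of the $\balpha$-step; hence the optimal surrogate value $Q_{\sf tpa}^{\sf lb}$ is no smaller than the value carried in. The global lower-bound property $\tilR^{\sf lb}(\bP)\le\tilR(\bP)$ shows that the returned $\bP^{\{j+1\}}$, which satisfies the surrogate LU constraint~\eqref{eq:C2P1TPAA}, also satisfies the true constraint~\eqref{eq:C2P1TPA} and is therefore feasible for~\eqref{eq:P1TPA}. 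Since the objective constraint~\eqref{eq:C1P1TPAA} coincides with~\eqref{eq:C1P1TPA}, the min-throughput genuinely realized by $\bP^{\{j+1\}}$ equals $Q_{\sf tpa}^{\sf lb}$. Chaining the three inequalities yields $Q^{\{j+1\}}\ge Q^{\{j\}}$.

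It then remains to bound the sequence from above. Every per-BD throughput $R_m$ in~\eqref{eq:RateBDm} is finite on the feasible set, since $\tau_m\le 1$, the aggregate power obeys $\sum_m\sum_k \tau_m P_{m,k}\le\barP$ with $P_{m,k}\le P_{\sf peak}$, and all channel gains are finite; hence $Q\le\max_m R_m$ is uniformly bounded. A non-decreasing sequence bounded above converges, which establishes the claim. I expect the main obstacle to be precisely the bookkeeping around the SCO step: one must argue carefully that the tightness of~\eqref{eq:SCALB} keeps the old iterate feasible for the surrogate problem (so the objective cannot drop), while the lower-bound property keeps the new iterate feasible for the original problem (so the values being compared live on the same true constraint set), and that the common-throughput objective is left invariant by the surrogate replacement.
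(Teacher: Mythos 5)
Your proof is correct and follows essentially the same route as the paper: per-block monotonicity for the two exactly-solved subproblems, then for the SCO step the same three facts the paper labels $(a)$--$(c)$ (tightness of the first-order surrogate at $\bP^{\{j\}}$, optimality of $\bP^{\{j+1\}}$ for the surrogate problem \eqref{eq:P1TPAA}, and the global lower-bound property transferring feasibility and value back to the true problem \eqref{eq:P1TPA}), followed by boundedness of the objective to conclude convergence. Your feasibility bookkeeping around the surrogate step and the explicit upper-bound argument are slightly more detailed than the paper's, but the underlying argument is identical.
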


\begin{proof}
First, in step 3 of Algorithm \ref{AlgorithmP1}, since the optimal solution $\btau^{\{j+1\}}$ is obtained for given $\balpha^{\{j\}}$ and $\bP^{\{j\}}$, we have the following inequality on the minimum throughput 
\begin{align}
  Q(\btau^{\{j\}}, \balpha^{\{j\}}, \bP^{\{j\}}) \leq Q(\btau^{\{j+1\}}, \balpha^{\{j\}}, \bP^{\{j\}}). \label{eq:Qinequality1}
\end{align}

Second, in step 4 of Algorithm \ref{AlgorithmP1}, since the optimal solution $\balpha^{\{j+1\}}$ is obtained for given $\btau^{\{j+1\}}$ and $\bP^{\{j\}}$, it holds that
\begin{align}
  Q(\btau^{\{j+1\}}, \balpha^{\{j\}}, \bP^{\{j\}} )\! \leq \! Q(\btau^{\{j+1\}}, \balpha^{\{j+1\}}, \bP^{\{j\}}).\label{eq:Qinequality2}
\end{align}


Third, in step 5 of Algorithm \ref{AlgorithmP1}, it follows that
\begin{align}
  Q(\btau^{\{j+1\}}\!, \! \balpha^{\{j+1\}}\!, \! \bP^{\{j\}})\! &\eqa\! Q^{\sf {lb}, \{j\}}_{\sf tpa} \!(\!\btau^{\{j+1\}}, \! \balpha^{\{j+1\}}, \! \bP^{\{j\}})\! \nonumber \\
  &\!\leb\!  Q^{\sf {lb}, \{j\}}_{\sf tpa} (\btau^{\{j+1\}}, \! \balpha^{\{j+1\}}, \! \bP^{\{j+1\}}) \nonumber \\
  &\!\lec\! Q(\btau^{\{j+1\}}\!, \balpha^{\{j+1\}}\!, \bP^{\{j+1\}}), \label{eq:Qinequality3}
\end{align}
where ($a$) holds since the Taylor expansion in \eqref{eq:SCALB} is tight at given local point, which implies that problem \eqref{eq:P1TPAA} at $\bP^{\{j\}}$ has the same objective function as that of problem \eqref{eq:P1TPA}; ($b$) comes from the fact that $\bP^{\{j+1\}}$ is the optimal solution to problem \eqref{eq:P1TPAA}; and ($c$) holds since the objective value of problem \eqref{eq:P1TPAA} is a lower bound of that of its original problem \eqref{eq:P1TPA}. The inequality in \eqref{eq:Qinequality3} indicates that the objective value is always non-decreasing after each iteration, although an approximated optimization problem \eqref{eq:P1TPAA} is solved to obtain the optimal subcarrier power allocation $\bP$ in each iteration.

From \eqref{eq:Qinequality1}, \eqref{eq:Qinequality2}, and \eqref{eq:Qinequality3}, we further have
\begin{align}
   Q(\!\btau^{\{j\}}, \balpha^{\{j\}}, \bP^{\{j\}})\! \leq Q(\!\btau^{\{j+1\}}, \balpha^{\{j+1\}}, \bP^{\{j+1\}})\!,
\end{align}
which implies that the objective value of problem \eqref{eq:P1} is non-decreasing after each iteration in Algorithm \ref{AlgorithmP1}. It is easy to check that the objective value of problem \eqref{eq:P1} is upper-bounded by some finite positive number. Hence, the proposed Algorithm \ref{AlgorithmP1} is guaranteed to converge. This completes the convergence proof. 
\end{proof}
As will be numerically shown in Section \ref{simulation}, Algorithm \ref{AlgorithmP1} converges typically in a few iterations, which is quite fast for our simulation setup.

Finally, it is noted that the time complexity of Algorithm \ref{AlgorithmP1} is polynomial, since only one LP and two convex optimization problems need to be solved in each iteration. Hence, the proposed Algorithm \ref{AlgorithmP1} can be practically implemented with fast convergence for an F-ABCN with a moderate number of BDs.


\section{Throughput Region Characterization}\label{sec:region}
In this section, we first introduce the throughput region to characterize the optimal throughput performance of all BDs. Then, we formulate an optimization problem to find each boundary point of the throughput region.

The throughput region is defined as follows:
\begin{align}\label{eq:rate_region}
  \calR &\triangleq \underset{\substack{\eqref{eq:C2P1}, \eqref{eq:C3P1}, \eqref{eq:C4P1}, \\ \eqref{eq:C5P1}, \eqref{eq:C7P1}, \eqref{eq:C6P1}, \eqref{eq:C8P1}}}{\bigcup} \; (R_1, \ R_2, \ \ldots, \ R_M).
\end{align}

We apply the technique of throughput-profile vector, which is analogous to the rate-profile vector in~\cite{MohseniCioffJSAC06}, to characterize all the boundary points of the throughput region, where each boundary throughput tuple corresponds to a Pareto-optimal performance trade-off among BDs. Let $R$ denote the sum-throughput achieved by all BDs, i.e., $R=\sum \nolimits_{m=1}^M R_m$. Accordingly, we set  $R_m=\psi_m R$, where the coefficients $\psi_m$'s are subject to $\sum \nolimits_{m=1}^M \psi_m = 1$ and $\psi_m \geq 0, \ \forall m$. With each given throughput-profile vector $\bpsi=[\psi_1 \ \psi_2 \ \ldots \ \psi_M]^T$, the sum-throughput $R$ thus corresponds to a boundary point of the throughput region.

From the definition in \eqref{eq:rate_region}, each boundary point of the throughput region $\calR$ can be obtained by solving the following BD sum-throughput maximization problem with a given throughput-profile vector $\bpsi$.
\begin{subequations}
\label{eq:P3}
\begin{align}
&\underset{R, \btau, \balpha, \bP}{\max}  \quad R \\
& \text{s.t.} \; \frac{\tau_{m}}{N} \!\log\! \!\left(\!1 \!+\ \! \frac{\alpha_m}{\sigma^{2}} \sum_{k=0}^{N-1} |F_{m,k} G_{m,k}|^{2}P_{m,k} \!\right)\!  \!\geq\! \psi_m R ,  \forall m \label{eq:C1P3}\\
&\qquad \eqref{eq:C2P1}, \eqref{eq:C3P1},\eqref{eq:C4P1},\eqref{eq:C5P1},\eqref{eq:C6P1},\eqref{eq:C7P1},\eqref{eq:C8P1}.\label{eq:C2P3}
\end{align}
\end{subequations}

Notice that for given throughput-profile vector $\bm{\psi}$, the only difference between problem \eqref{eq:P3} and problem \eqref{eq:P1} is that each BD's throughput requirement in the constraint \eqref{eq:C1P3} is scaled by a constant $\psi_m$, compared to that in the constraint \eqref{eq:C1P1}. As a result, problem \eqref{eq:P3} can be efficiently solved by an algorithm analogous to Algorithm \ref{AlgorithmP1}, which is omitted herein for brevity.


We give a numerical example to depict the throughput region by setting $M=2, \ N=64$, $D=1$ bps/Hz, $E_{\sf min}=10 \ \mu$J, and $P_{\sf peak}=\frac{20}{MN}$. Set the FAP-to-BD1 distance and FAP-to-BD2 distance as 2.5 m and 4 m, respectively \cite{BackFiSigcom15}. All other parameters are the same as in Section~\ref{simulation}, and the average receive SNR at the FAP is defined as follows
\begin{align}\label{def:SNR}
  \bar{\gamma}=\frac{\barP}{\sigma^2} \sum \limits_{l=0}^{L_{\sf f}-1} \bbE[\left| {{g_{1,l}f_{1, l}}} \right|^2].
\end{align}
Fig.~\ref{fig:FigRegionBD} plots the throughput region of an F-ABCN with two BDs under one channel realization. For the special case of the throughput-profile vector $\bpsi=[0.5 \ 0.5]^T$, the problem \eqref{eq:P3} is equivalent to the max-min problem \eqref{eq:P1}, and the max-min throughputs are plotted as the star-marker points on the corresponding throughput boundary curves. For each SNR, the throughput of BD 1 decreases as that of BD 2 increases, which reveals the throughout trade-off between the two BDs. Specifically, for the given SNR of 20 dB, the maximum throughput of BD 1 is 0.064 bps/Hz (achieved when the throughput of BD 2 is 0 bps/Hz), which is greater than the maximum throughput of BD 2 as 0.054 bps/Hz (achieved when the throughput of BD 1 is 0 bps/Hz). This is because that the FAP-to-BD1 channel is stronger than the FAP-to-BD2 channel, due to shorter distance between the FAP and BD1. Also, we observe that higher throughput is achieved for higher SNR.

\begin{figure} [t!]
	\centering \includegraphics[width=.99\columnwidth]{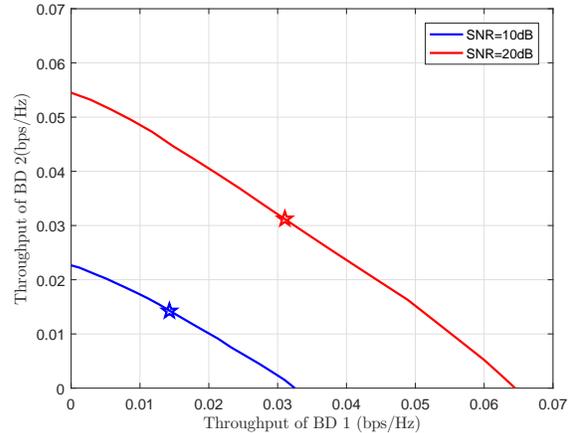}
	\caption{Example of throughput region of an F-ABCN with two BDs.} \label{fig:FigRegionBD}
\vspace{-0.6cm}
\end{figure}


\section{Numerical Results}\label{simulation}
In this section, we provide simulation results to evaluate the performance of the proposed F-ABCN with optimal resource allocation. We consider an F-ABCN with $M=2$ BDs. Suppose that the FAP-to-BD1 distance and FAP-to-BD2 distance are 2.5 m and 4 m, respectively, the FAP (BD1, BD2)-to-LU distances are all 15 m \cite{BackFiSigcom15}. We assume independent Rayleigh fading channels, and the power gains of multiple paths are exponentially distributed. For each channel link, its first-path channel power gain is assumed to be $10^{-3} d^{-2}$, where $d$ denotes the transmitter-to-receiver distance in m. Let the number of paths $L_{\sf f}=L_{\sf g}=4$, $L_{\sf h}=8$, and $L_{\sf v}=6$ \cite{YangLiangZhangPeiTCOM17}. Other parameters are set as $N=64, N_{\sf cp}=16, \barP=1, \epsilon=10^{-4}$, and $\eta_m=0.5, \forall m$. The average receive SNR at the FAP is defined in \eqref{def:SNR}. Let $E_{\min, 1}=E_{\min, 2}=E_{\min}$.  The FAP symbols $S_{m,k}(n)$'s and the BD symbols $X_m(n)$'s are all independently random variables and follow the capacity-achieving distribution, i.e., $S_{m,k}(n) \in \calC \calN(0,1), \ X_m(n) \in \calC \calN(0,1), \ \forall m, k$. The convex subproblems \eqref{eq:P1BTA}, \eqref{eq:P1RPA} and \eqref{eq:P1TPAA} are efficiently solved by the the CVX tool~\cite{CVXTool2016}. All results are obtained based on 100 random channel realizations.

\vspace{-0.2cm}
\subsection{Benchmark Schemes}
For performance comparison, we consider two benchmark schemes. The first one is the case of an F-ABCN with equal resource allocation, in which the backscatter time and subcarrier power are equally allocated, i.e., $\tau_m=\frac{1}{M}, P_{m,k}=P_{\sf ave}=\frac{1}{MN}$, and all BDs adopt a common power reflection coefficient optimized via CVX.

The second benchmark is the case of a half-duplex AmBC network (H-ABCN), in which a half-duplex access point (HAP) first transmits dedicated OFDM signal to the LU to satisfy its throughput constraint in the first phase of time proportion $\tau_0 \ (0 \leq \tau_0 \leq 1)$, then sends dedicated OFDM signal to receive backscattered information from $M$ BDs in a TDMA manner in the second phase (i.e., slot $1, \ \ldots, \ M$). In the first phase, all BDs harvest energy from their received signals. In the $m$-th slot with time proportion $\tau_m, \ 0 \leq \tau_m \leq 1$, for $m=1, \ \ldots, \ M$, of the second phase, BD $m$ reflects a portion (with power proportion $\alpha_m$) of its incident signal for information transmission and harvests energy from the remaining incident signal, and all other BDs only harvest energy. For convenience, we define the augmented backscatter time allocation vector ${\bm{\widetilde{\tau}}}=[\tau_0 \ \tau_1 \ \tau_2 \ \ldots \ \tau_M]^T$, and the augmented subcarrier power allocation matrix ${\bm{\widetilde{P}}}=[\bp_0 \ \bp_1 \ \bp_2 \ \ldots \ \bp_M]$, where $\bp_0$ and $\bp_m$ are the subcarrier power allocation vectors in the first phase and the $m$-th slot of the second phase, respectively. The corresponding minimum throughput among all BDs, denoted as $\tilQ$, can be maximized by solving the following problem
\begin{subequations}
\label{eq:PB2}
\begin{align}
&\underset{\tilQ, {\bm{\widetilde{\tau}}}, \balpha, {\bm{\widetilde{P}}}}{\max}  \quad \tilQ \\
& \text{s.t.} \quad \frac{\tau_{m}}{N} \!\log\! \!\left(\!1 \!+\ \! \frac{\alpha_m}{\sigma^{2}} \sum_{k=0}^{N-1} |F_{m,k} G_{m,k}|^{2}P_{m,k} \!\right)\!  \!\geq\! \tilQ,  \forall m \label{eq:C1PB2}\\
& \frac{\tau_0}{N} \sum_{k=0}^{N-1}  \!\log\! \!\left(\!1 \!+\! \frac{|H_{k}|^2 P_{0, k} }{\sigma^2}\!\right)\!  \!\geq\! D \label{eq:C2PB2} \\
&\eta_m  \sum_{k=0}^{N-1}  |F_{m,k}|^{2} \Big[ \tau_0 P_{0,k}  + \tau_{m}  P_{m,k} ( 1- \alpha_m ) +... \nonumber \\
& \quad \quad \quad \quad \quad \!\sum_{r=1, \ r \neq m}^M  \tau_r P_{r,k} \Big] \geq E_{\min, m}, \quad \forall m \label{eq:C3PB2} \\
& \tau_0 \sum_{k=0}^{N-1}  P_{0,k}+ \sum_{m=1}^M  \sum_{k=0}^{N-1} \tau_{m}P_{m,k}\leq \barP \label{eq:C4PB2}\\
& \tau_0 + \sum_{m=1}^M \tau_{m}\leq 1 \label{eq:C5PB2} \\
&  \tau_{0}\geq  0, \ \tau_{m}\geq  0, \quad \forall m \label{eq:C6PB2}\\
& 0 \leq P_{0,k} \leq  P_{\sf peak}, \quad 0 \leq P_{m,k} \leq  P_{\sf peak}, \quad \forall m, \ k \label{eq:C7BPB2} \\
&  0 \leq \alpha_m \leq  1, \quad \forall m. \label{eq:C8PB2}
\end{align}
\end{subequations}
Notice that the above problem \eqref{eq:PB2} has the same structure as problem \eqref{eq:P1}, thus it can be efficiently solved by an algorithm similar to Algorithm \ref{AlgorithmP1}.


\vspace{-0.2cm}
\subsection{Simulation Results}

Fig.~\ref{fig:FigSim1} plots the max-min throughput of all BDs versus the LU's throughput requirement $D$ under different SNRs $\bar{\gamma}$'s, for the proposed F-ABCN and both benchmark schemes. As in \cite{BoyerSumit14,ZhouZhangWirelessCom14}, we fix $E_{\min}=10 \  \mu$J and $P_{\sf peak} = 20 P_{\sf ave}$. As expected, the max-min throughput decreases as $D$ increases, which reveals the throughput trade-offs between the BDs and the LU. Moreover, compared to both benchmark schemes, we observe that the max-min throughput performance is significantly enhanced by using the proposed joint design. For the case of $D \leq 2$ bps/Hz and 20 dB SNR, the max-min throughput for the proposed F-ABCN with optimal resource allocation is 0.0255 bps/Hz, which is increased by 100.8\% compared to the benchmark of F-ABCN with equal resource allocation, and by 116.1\% compared to the benchmark of H-ABCN with optimal resource allocation. This significant performance gain justifies the advantages of the proposed F-ABCN over the H-ABCN benchmark, although the FAP in an F-ABCN requires higher processing complexity due to the SIC operation. Also, higher max-min throughput is achieved when the receive SNR at the FAP is higher. For an F-ABCN with 20 MHz bandwidth and 20 dB SNR, the achieved max-min BD throughput is around 502 Kbps, when the required LU throughput is no higher than 30 Mbps.

\begin{figure} [!t]
	\centering	\includegraphics[width=.99\columnwidth]{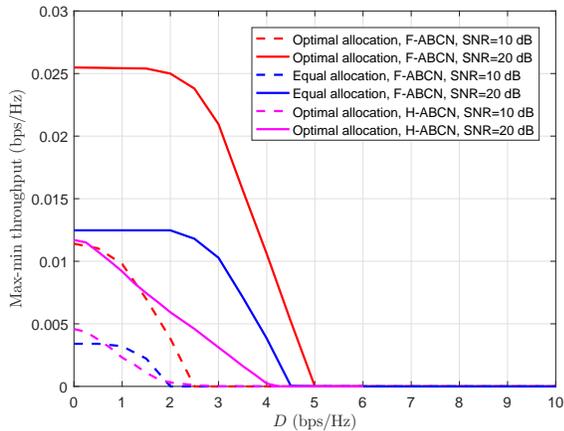}
	\caption{Max-min throughput versus LU's throughput requirement at different SNRs.} \label{fig:FigSim1}
\vspace{-0.4cm}
\end{figure}

\begin{figure} [t!]
	\centering \includegraphics[width=.99\columnwidth]{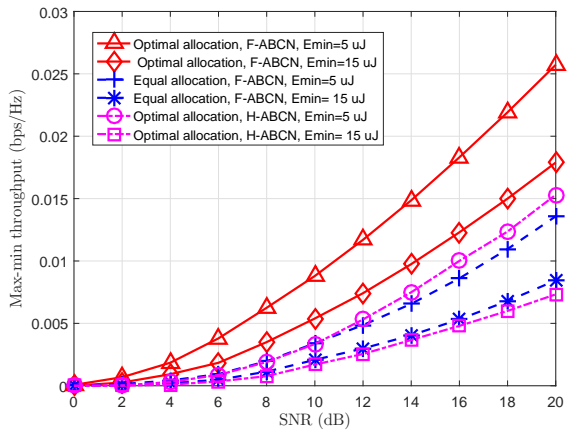}
\caption{Max-min throughput versus SNR with different harvested-energy constraints.} \label{fig:FigSim2aa}
\vspace{-0.6cm}
\end{figure}


Fig.~\ref{fig:FigSim2aa} plots the max-min throughput versus the SNR under different BDs' energy requirements $E_{\min}$'s, for the proposed F-ABCN and both benchmark schemes. We fix the LU's throughput requirement $D=1$ bps/Hz and the subcarrier peak power $P_{\sf peak} = 20 P_{\sf ave}$. First, we observe that the proposed joint design achieves significant throughput gain as compared to the benchmark schemes. For the case of $\bar{\gamma}=20$ dB and $E_{\min}=5  \ \mu$J, the proposed F-ABCN achieves 70\% throughput improvement, compared to the benchmark of H-ABCN. Second, higher throughput is achieved for lower harvested-energy requirement $E_{\min}$ with given $P_{\sf peak}$, which reveals the BDs' throughput-energy trade-off.


Fig.~\ref{fig:FigSim3aa} plots the max-min throughput versus the receive SNR for different subcarrier peak-power values $P_{\sf peak}$'s, for the proposed F-ABCN and both benchmarks. We fix $E_{\min}=10 \  \mu$J and $D=1$ bps/Hz. As compared to the benchmark schemes, the max-min throughput is significantly enhanced by the proposed F-ABCN. For the case of $\bar{\gamma}=20$ dB and $P_{\sf peak}=5 P_{\sf ave}$, the max-min throughput for the proposed F-ABCN is 0.018 bps/Hz, which is increased by 55.2\% compared to the benchmark of F-ABCN with equal resource allocation, and by 73.1\% compared to the H-ABCN benchmark. Also, higher max-min throughput is obtained for higher peak-power value $P_{\sf peak}$.

\begin{figure} [t]
	\centering	\includegraphics[width=.99\columnwidth]{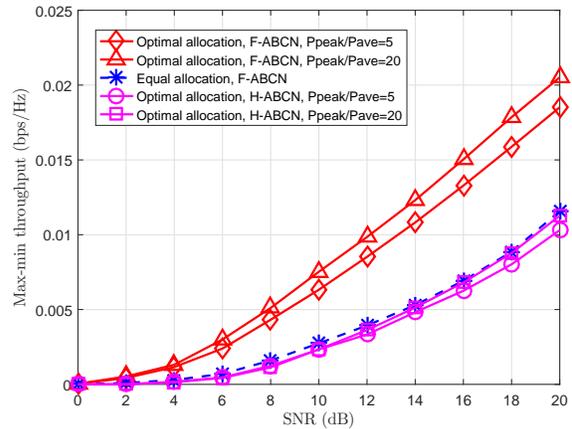}
		\caption{Max-min throughput versus SNR with different peak-power constraints.} \label{fig:FigSim3aa}
\vspace{-0.6cm}
\end{figure}

Finally, we study the convergence performance of the proposed Algorithm \ref{AlgorithmP1} that solves the general optimization problem \eqref{eq:P1} for an F-ABCN with multiple BDs. Fig.~\ref{fig:FigSim4Convergence} depicts the average convergence behavior of Algorithm \ref{AlgorithmP1}. It is observed that this algorithm takes about 5 iterations to converge. The converged average max-min throughput is 0.02028 bps/Hz. To verify that the global max-min throughput is achieved, we compare the obtained max-min throughput with that by an exhaustive search, which is equal to 0.0202 bps/Hz. Thus, Algorithm \ref{AlgorithmP1} does achieve the global optimality of max-min throughput within a guaranteed error of $8 \times 10^{-5}$, which is smaller than the threshold $\epsilon=10^{-4}$ set in the simulation.
\begin{figure} [t]
	\centering \includegraphics[width=.99\columnwidth]{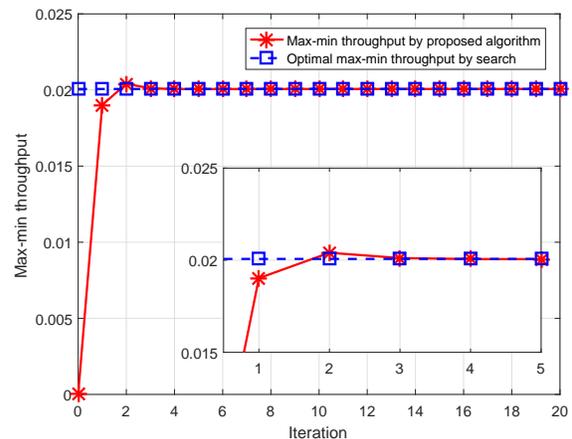}
		\caption{Convergence behavior of Algorithm \ref{AlgorithmP1}.} \label{fig:FigSim4Convergence}
\vspace{-0.6cm}
\end{figure}

\vspace{-0.1cm}
\section{Conclusion}\label{conslusion}
This paper has investigated a full-duplex AmBC network (F-ABCN) over ambient OFDM carriers. The minimum throughput among all BDs is maximized by jointly optimizing the BDs' backscatter time allocation, the BDs' power reflection coefficients, and the FAP's subcarrier power allocation. Analytical solutions are first obtained for the optimal resource allocation in a single-BD F-ABCN. Then, for a multiple-BD F-ABCN, by utilizing the block coordinated decent and successive convex optimization techniques, an efficient iterative algorithm is proposed for solving the non-convex joint optimization problem, which is guaranteed to converge to at least a locally optimal solution. The throughput region is introduced to characterize all the Pareto-optimal throughput performance trade-offs among all BDs. Numerical results show that significant throughput gains are achieved as compared to the benchmark scheme of the F-ABCN with equal resource allocation and that of the half-duplex AmBC network (H-ABCN) with optimal resource allocation, due to the proposed multi-dimensional resource allocation joint optimization and the efficient full-duplex operation at the FAP. The BDs' throughput-energy trade-off and the throughput trade-off between the BDs and the LU are also revealed. This work can be further extended to the cases of multiple LUs, imperfect self-interference cancellation at the FAP, and/or imperfect channel state information, etc.


\appendices
\vspace{-0.15cm}
\section{Proofs of Theorem~\ref{the:OptP}}\label{app:Theorem1}
\begin{proof}
  Given $\alpha$, the Lagrangian in \eqref{eq:Lagrangian} is
  \begin{align}
&\calL(\bp,\lambda,\theta,\mu) = \sum_{k=0}^{N-1} \calL_k (P_k,\lambda,\theta,\mu) \!-\! \lambda D \!-\! \theta E_{\min} \!+\! \mu \barP, \nonumber 
\end{align}
where $ \calL_k (P_k,\lambda,\theta,\mu), \ k=0, 1, \ldots, N-1$ is given by
\begin{align}
&\calL_k(P_k,\lambda,\theta,\mu) = |F_{k} G_{k}|^{2}P_{k} +... \nonumber \\ 
& \qquad  \frac{\lambda}{N} \!\log\! \!\left(\!1 \!+\! \frac{|H_{k}|^2 P_{k} }{\sigma^2}\!\right)+\theta \eta   |F_{k}|^{2} P_{k} ( 1\!-\! \alpha) - \mu P_{k}. \nonumber
\end{align}
Given $\lambda \geq 0, \theta \geq 0$ and $\mu$, the dual function $\calG(\lambda,\theta,\mu)$ in \eqref{eq:dualfunc} can be obtained by maximizing individual $ \calL_k (P_k,\lambda,\theta,\mu), \ k=0, 1, \ldots, N-1$, subject to \eqref{eq:C4P2A}, as $\calL_k(P_k,\lambda,\theta,\mu)$ is only determined by $P_k$, by solving the following problem, for $k=0, \, \ldots,N-1$, 
\begin{subequations}
\label{eq:LagragianOptk}
\begin{align}
&\underset{P_k}{\max}  \quad \calL_k (P_k,\lambda,\theta,\mu) \label{eq:ObjLOptk} \\ 
& \text{s.t.} \quad 0 \leq P_{k} \leq  P_{\sf peak}. \label{eq:C1ObjLOptk}
\end{align}
\end{subequations}

The maximizer of the dual function $\calG(\lambda,\theta,\mu)$ with given $\lambda, \theta$ and $\mu$ can be obtained by setting $\frac{\partial \calL_k}{\partial P_{k}}=0$, from which we have the following equality
\begin{align}\label{eq:KKTDeriv}
  |F_{k} G_{k}|^{2} \!+\! \frac{\lambda}{N(P_k  \!+\! \sigma^2/|H_k|^2)} \!+\! \theta  \eta  |F_{k}|^{2} - \mu \!=\!0.
\end{align}
Since $0 \leq P_k \leq P_{\sf peak}$, the optimal subcarrier power is given from \eqref{eq:KKTDeriv} as in \eqref{eq:OptPk}. This completes the proof.
\end{proof}

\vspace{-0.15cm}
 \renewcommand{\baselinestretch}{0.92}
\bibliography{IEEEabrv,reference1804}
\bibliographystyle{IEEEtran}

\end{document}